\algrenewcommand\algorithmicindent{1em}
\renewcommand{\ALG@name}{Protocol}
\algrenewcommand\algorithmicdo{}
\newtheorem{definition}{Definition}
\newtheorem{assumption}{Assumption}
\newtheorem{lemma}{Lemma}
\newtheorem{theorem}{Theorem}
\newtheorem{corollary}{Corollary}
\newtheorem{remark}{Remark}
\newcommand{\defref}[1]{Definition~\ref{#1}}
\newcommand{\asmref}[1]{Assumption~\ref{#1}}
\newcommand{\lemref}[1]{Lemma~\ref{#1}}
\newcommand{\thmref}[1]{Theorem~\ref{#1}}
\newcommand{\corref}[1]{Corollary~\ref{#1}}
\newcommand{\secref}[1]{Section~\ref{#1}}
\newcommand{\appref}[1]{Appendix~\ref{#1}}
\newcommand{\figref}[1]{Fig.~\ref{#1}}
\newcommand{\tabref}[1]{Table~\ref{#1}}
\newcommand{\proref}[1]{Protocol~\ref{#1}}
\newcommand{\N}{\mathbb{N}}
\newcommand{\Z}{\mathbb{Z}}
\newcommand{\Q}{\mathbb{Q}}
\newcommand{\R}{\mathbb{R}}
\newcommand{\ZZ}[1]{\Z_{\langle #1 \rangle}}
\newcommand{\QQ}[2]{\Q_{\langle #1, #2 \rangle}}
\newcommand{\Share}{\mathsf{Share}}
\newcommand{\Reconst}{\mathsf{Reconst}}
\newcommand{\Mult}{\mathsf{Mult}}
\newcommand{\Trunc}{\mathsf{Trunc}}
\newcommand{\PRF}{\mathsf{PRF}}
\newcommand{\PCF}{\mathsf{PCF}}
\newcommand{\aux}{\mathsf{aux}}
\newcommand{\bfzero}{\mathbf{0}}
\DeclareMathOperator{\inv}{inv}
\DeclarePairedDelimiter{\floor}{\lfloor}{\rfloor}
\DeclarePairedDelimiter{\round}{\lceil}{\rfloor}
\DeclarePairedDelimiter{\share}{\llbracket}{\rrbracket}
\def\BibTeX{{\rm B\kern-.05em{\sc i\kern-.025em b}\kern-.08em
    T\kern-.1667em\lower.7ex\hbox{E}\kern-.125emX}}
\begin{document}
\title{Client-Aided Secure Two-Party Computation of Dynamic Controllers}
\author{Kaoru Teranishi, \IEEEmembership{Member, IEEE} and Takashi Tanaka, \IEEEmembership{Senior Member, IEEE}
\thanks{Kaoru Teranishi and Takashi Tanaka are with the School of Aeronautics and Astronautics, Purdue University, West Lafayette, IN 47907-2045, USA (e-mail: kteranis@purdue.edu, tanaka16@purdue.edu).}
\thanks{Kaoru Teranishi is also with Japan Society for the Promotion of Science, Chiyoda, Tokyo 102-0083, Japan.}}

\thispagestyle{empty}
\hspace{-4.5mm}
\fbox{
\begin{minipage}{\textwidth-5mm}\scriptsize
© 20XX IEEE.
Personal use of this material is permitted.
Permission from IEEE must be obtained for all other uses, in any current or future media, including reprinting/republishing this material for advertising or promotional purposes, creating new collective works, for resale or redistribution to servers or lists, or reuse of any copyrighted component of this work in other works.
\end{minipage}
}
\newpage
\setcounter{page}{0}

\maketitle
\thispagestyle{empty}

\begin{abstract}
    In this paper, we propose a secure two-party computation protocol for dynamic controllers using a secret sharing scheme.
    The proposed protocol realizes outsourcing of controller computation to two servers, while controller parameters, states, inputs, and outputs are kept secret against the servers.
    Unlike previous encrypted controls in a single-server setting, the proposed method can operate a dynamic controller for an infinite time horizon without controller state decryption or input re-encryption.
    We show that the control performance achievable by the proposed protocol can be made arbitrarily close to that attained by the unencrypted controller.
    Furthermore, system-theoretic and cryptographic modifications of the protocol are presented to improve the communication complexity.
    The feasibility of the protocol is demonstrated through numerical examples of PID and observer-based controls.
\end{abstract}

\begin{IEEEkeywords}
Cyber-physical systems, encrypted control, multi-party computation, privacy, security
\end{IEEEkeywords}

\section{Introduction}
\label{sec:introduction}
\IEEEPARstart{T}{he} need for protecting privacy and security in control systems has become critical due to growing cyber-physical systems in various industries.
Encrypted control, which integrates automatic control with cryptographic technology, has emerged to prevent eavesdropping attacks and securely outsource controller computations for an untrusted third party~\cite{Darup2021-qq,Kim2022-ck,Schluter2023-ia}.
The initial realization of encrypted control was achieved in a single-server setting by using homomorphic encryption~\cite{Kogiso2015-go}.
Homomorphic encryption is an encryption scheme that allows arithmetic operations such as addition and multiplication on encrypted data~\cite{Regev2009-ys,Brakerski2014-ny,Fan2012-jt,Gentry2013-eq,Cheon2017-pd}.
In this framework, as illustrated in \figref{fig:controls}\subref{fig:single}, the server receives encrypted measurements from the sensor, performs controller computation using encrypted parameters without decryption, and returns encrypted control actions to the actuator.
Unlike other privacy mechanisms, such as differential privacy~\cite{Han2018-pm}, encrypted control offers reliable security built on cryptography without a tradeoff between control performance and security level~\cite{Schluter2023-ia}.

With existing homomorphic encryption schemes, arithmetic operations on encrypted data are usually performed on a finite range of integers.
One of the main challenges in this context is to prevent overflow in recursive state updates when encrypting dynamic controllers (see \secref{sec:dynamic_controllers_over_integers}).
To address the overflow issue, fully homomorphic encryption with bootstrapping was employed in~\cite{Kim2016-lc}, and the periodic reset of controller states was considered in~\cite{Murguia2020-ov}.
Recent approaches to this issue involve reformulating controller representations.
For instance, one approach approximated a controller using a finite impulse response filter~\cite{Schluter2021-ek}.
In addition, another approach reformulated a controller based on pole placement and coordinate transformation using re-encrypted control inputs~\cite{Kim2023-pk}.
The input re-encryption was also applied in~\cite{Teranishi2024-ha} and~\cite{Lee2025-jo}, which represented controller states as historical input and output data.

\begin{figure}[t]
    \centering
    \subfigure[Encrypted control in a single-server setting.]{\includegraphics[scale=1]{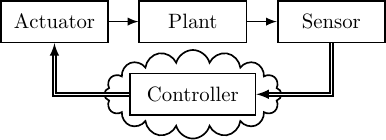}\label{fig:single}}
    \subfigure[Encrypted control in a two-party computation setting.]{\includegraphics[scale=1]{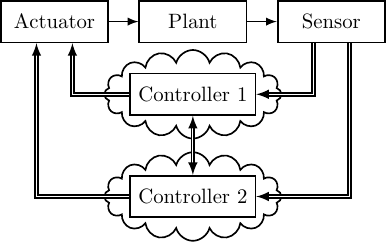}\label{fig:2pc}}
    \caption{System architectures of encrypted controls in single-server and two-party computation settings.}
    \label{fig:controls}
    \vspace{-5mm}
\end{figure}

While these methods have advanced encrypted control in a single-server setting, exploring encrypted control in a multi-party computation setting presents potential advantages.
Multi-party computation using secret sharing~\cite{Cramer2015-sj} is generally more computationally efficient and secure than homomorphic encryption in exchange for an increase in communication costs.
Furthermore, multi-party computation offers the ability to evaluate various functionalities.
This property motivates the implementation of encrypted control in a multi-party setting because homomorphic encryption typically requires substantial computational resources for computing mathematical operations beyond addition and multiplication.

\figref{fig:controls}\subref{fig:2pc} illustrates a typical system architecture of two-party encrypted control using secret sharing.
Each controller receives a share of sensor measurements, securely performs controller computation (along with communication to the other controller if needed), and returns a share of control inputs to the actuator.
In this framework, an efficient and secure implementation of static controllers was suggested~\cite{Darup2019-ou}.
This method was extended to polynomial control by combining secret sharing with homomorphic encryption~\cite{Darup2020-cm}.
The polynomial control was also realized based only on secret sharing in more than a three-party setting~\cite{Schlor2021-am}.
Furthermore, in a two-party setting, secret sharing with a garbled circuit was applied to implement a neural-network-based controller~\cite{Tjell2021-ar}, and model predictive control using homomorphic encryption was considered~\cite{Alexandru2018-jv}.

In this paper, we aim to design an efficient protocol for securely implementing dynamic controllers.
For this purpose, we follow the multi-party computation approach and perform bit truncation over secret sharing to prevent controller state overflow due to recursive multiplication.
The main contribution of this study is to demonstrate the effectiveness of multi-party computation for encrypted dynamic control.
Our contributions are summarized as follows:

\begin{itemize}
    \item
    We realize secure two-party computation of \emph{any} linear time-invariant controllers \emph{without requiring state decryption or input re-encryption}.
    This requirement of previous encrypted dynamic controls in a single-server setting~\cite{Kim2023-pk,Teranishi2024-ha,Lee2025-jo} is undesirable because of increasing computation and communication efforts on a client.
    Compared to the previous schemes, the proposed protocol is simpler and significantly reduces the computation costs of a client.

    \item
    We analyze the achievable control performance of the proposed protocol.
    This analysis clarifies that the performance can be arbitrarily close to that of the original controller by increasing the precision of the fixed-point numbers approximating the controller parameters and initial state.
    The result is examined by numerical examples with PID and observer-based controllers.

    \item
    The proposed protocol guarantees statistical security as long as the two parties do not collude with each other.
    It is stronger than computational security satisfied by conventional encrypted controls using homomorphic encryption.
\end{itemize}

One of the main drawbacks of the proposed protocol is a high communication cost when implementing large-scale controllers.
To attenuate communication efforts, we also provide modifications to the proposed protocol based on system-theoretic and cryptographic techniques.

The remainder of this paper is organized as follows.
\secref{sec:problem_setting} describes a problem setting.
\secref{sec:dynamic_controllers_over_integers} revisits the overflow problem in encrypting dynamic controllers.
\secref{sec:secret_sharing_and_beaver_triple} introduces secret sharing and Beaver triple.
\secref{sec:two-party_computation_of_dynamic_controllers} proposes a protocol for secure two-party computation of dynamic controllers and analyzes its control performance.
It also compares the computational and communication complexities of the proposed protocol with those of previous schemes.
\secref{sec:improvement_of_communication_complexities} presents system-theoretic and cryptographic improvements for reducing communication complexity.
\secref{sec:numerical_examples} examines the efficiency of the proposed protocol by numerical examples.
Finally, \secref{sec:conclusion} concludes the paper.

\emph{Notation:}
The sets of integers, nonnegative integers, positive integers, and real numbers are denoted by $\Z$, $\N_0$, $\N$, and $\R$, respectively.
For $q \ge 2$, $\Z_q \coloneqq \Z \cap [-q/2, q/2)$ denotes the set of integers within $[-q/2, q/2)$.
For $m \in \Z$, $m \bmod q \coloneqq m - \floor{ \frac{m + q / 2}{q} } q$ denotes the reduction of $m$ modulo $q$ over $\Z_q$.
For $m \in \Z_q$, $\inv(m, q) \in \Z_q$ denotes the modular inverse of $m$ modulo $q$, namely $\inv(m, q) \cdot m \bmod q = 1$.
For $x \in \R$, define the floor and rounding functions as $\floor{x} \coloneqq \max\{m \in \Z \mid m \le x\}$ and $\round{x} \coloneqq \floor{x + 1/2}$, respectively.
For vectors and matrices, $(\cdot \bmod q)$, $\floor{ \cdot }$, and $\round{ \cdot }$ are also used element wise.
The $n$-by-$n$ identity and $m$-by-$n$ zero matrices are denoted by $I_n$ and $\bfzero_{m \times n}$, respectively.
The $\ell_2$ and $\ell_\infty$ norms of vector $v \in \R^n$ are denoted by $\norm{v}$ and $\norm{v}_\infty$, respectively.
The induced $2$-norm and infinity norm of a matrix $M \in \R^{m \times n}$ are denoted by $\norm{M}$ and $\norm{M}_\infty$, respectively.

\section{Problem Setting}
\label{sec:problem_setting}

Consider the discrete-time linear time-invariant plant
\begin{equation}
    \begin{aligned}
        x_p(t + 1) &= A_p x_p(t) + B_p u(t), \\
        y(t) &= C_p x_p(t),
    \end{aligned}
    \label{eq:plant}
\end{equation}
where $t \in \N_0$ is the time index, $x_p(t) \in \R^{n_p}$ is the state, $u(t) \in \R^m$ is the input, $y(t) \in \R^p$ is the output, and the initial state is $x_p(0) = x_{p,0}$.
Suppose the plant is controlled by the discrete-time linear time-invariant controller
\begin{equation}
    \begin{aligned}
        x(t + 1) &= A x(t) + B y(t), \\
        u(t) &= C x(t) + D y(t),
    \end{aligned}
    \label{eq:controller}
\end{equation}
where $x(t) \in \R^n$ is the controller state, and its initial value is $x(0) = x_0$.

Define the set of $k$-bit fixed-point numbers with an $\ell$-bit fractional part as $\QQ{k}{\ell} \coloneqq \{ 2^{-\ell} z \mid z \in \ZZ{k} \}$, where $\ZZ{k} \coloneqq \Z_{2^k} = \{ -2^{k - 1}, \dots, 2^{k - 1} - 1 \}$ represents the set of $k$-bit integers.
We assume that the controller parameters and the initial state are represented by $k$-bit fixed-point numbers.

\begin{assumption}
\label{asm:fixed-point}
    Let $k$ and $\ell$ be positive integers, such that $k > \ell$.
    The controller parameters and initial state satisfy $A \in \QQ{k}{\ell}^{n \times n}$, $B \in \QQ{k}{\ell}^{n \times p}$, $C \in \QQ{k}{\ell}^{m \times n}$, $D \in \QQ{k}{\ell}^{m \times p}$, and $x_0 \in \QQ{k}{\ell}^n$.
\end{assumption}

Note that this assumption is reasonable in practice because fixed-point numbers can approximate real-valued controller parameters and initial state with desired precision $k$ and $\ell$.

We also assume that the closed-loop system consisting of \eqref{eq:plant} and \eqref{eq:controller},
\begin{equation}
    \begin{bmatrix}
        x_p(t + 1) \\
        x(t + 1)
    \end{bmatrix}
    = \Phi
    \begin{bmatrix}
        x_p(t) \\
        x(t)
    \end{bmatrix}, \ 
    \Phi \coloneqq
    \begin{bmatrix}
        A_p + B_p D C_p & B_p C \\
        B C_p & A
    \end{bmatrix},
    \label{eq:closed-loop}
\end{equation}
is stable.

\begin{assumption}
\label{asm:stability}
    The matrix $\Phi$ in \eqref{eq:closed-loop} is Schur stable.
\end{assumption}

The goal of this study is to design a secure two-party protocol for running a dynamic controller without state decryption and input re-encryption, where the control performance with the protocol is sufficiently close to that of the original controller \eqref{eq:controller}.
In other words, we will propose a secure protocol that outputs $\hat{u}(t)$ satisfying, for any given $\epsilon > 0$,
\begin{equation}
    \norm*{u(t) - \hat{u}(t)} \le \epsilon
    \label{eq:error}
\end{equation}
for all $t \in \N_0$.

We adopt a client-aided computation model~\cite{Ohata2020-py} for two-party computation.
In this setting, the client who initiates queries to two parties provides not only the protocol inputs but also auxiliary inputs, although it does not interact with them during the computation.
The parties use these auxiliary inputs to execute the protocol efficiently and return the outputs to the client.
We also assume peer-to-peer secure channels as a network model, which can be realized by standard symmetric-key encryption.
Furthermore, throughout this paper, we assume the following standard assumptions for two-party computation.

\begin{assumption}
\label{asm:adversary}
    The parties who perform a protocol are semi-honest and do not collude with each other.
\end{assumption}

This assumption implies that the parties follow our intended protocol but independently attempt to learn private information of the client from their received messages.
It also prohibits external adversaries from corrupting or accessing both parties.

\section{Dynamic Controllers Over Integers}
\label{sec:dynamic_controllers_over_integers}

Our basic idea for designing a protocol for the secure implementation of the dynamic controller \eqref{eq:controller} is to split it into two randomized controllers.
The parameters and states of these controllers are represented by the shares of a secret sharing scheme.
Each randomized controller is installed on a distinct server that computes a share of $u(t)$ using the controller with a share of $y(t)$ and returns it to the plant.

To perform controller computation with secret sharing, the controller parameters, states, inputs, and outputs should be encoded into matrices and vectors over integers.
This section revisits the well-known overflow problem~\cite{Cheon2018-vr,Kim2023-pk} caused by such encoding and clarifies the nature of the difficulty of encrypting dynamic controllers using homomorphic encryption.

We begin by encoding the controller parameters, the initial state, and the plant output.
\asmref{asm:fixed-point} results in the encoded parameters and initial state given as
\begin{equation}
    \begin{alignedat}{3}
        \bar{A} &\coloneqq 2^\ell A \in \ZZ{k}^{n \times n}, &\ \bar{B} &\coloneqq 2^\ell B \in \ZZ{k}^{n \times p}, &\ \bar{C} &\coloneqq 2^\ell C \in \ZZ{k}^{m \times n}, \\
        \bar{D} &\coloneqq 2^\ell D \in \ZZ{k}^{m \times p}, &\ \bar{x}_0 &\coloneqq 2^\ell x_0 \in \ZZ{k}^n. & &
    \end{alignedat}
    \label{eq:encoded_parameters}
\end{equation}
Similarly, the output of the plant \eqref{eq:plant} can be encoded as
\[
    \bar{y}(t) \coloneqq \round*{2^\ell y(t)} \in \Z^p,
\]
where the rounding is performed in an element-wise manner.

With the encoded parameters, state, and output, one might consider the following controller over integers~\cite{Kim2016-lc,Murguia2020-ov},
\begin{align*}
    z(t + 1) &= \bar{A} z(t) + 2^{t \ell} \bar{B} \bar{y}(t), \\
    v(t) &= \bar{C} z(t) + 2^{t \ell} \bar{D} \bar{y}(t),
\end{align*}
where $z(t) \in \Z^n$, $v(t) \in \Z^m$, and $z(0) = \bar{x}_0$.
Unfortunately, this controller is problematic due to the increase in the bit length of the controller state.
The controller state at time $t \in \N$ is given as
\[
    z(t) \approx 2^{(t + 1) \ell} \qty( A^t x_0 + \sum_{s=0}^{t-1} A^{t - 1 -s} B y(s) ) = 2^{(t + 1) \ell} x(t),
\]
if $\bar{y}(t) \approx 2^\ell y(t)$.
Hence, the number of bits in the controller state increases by $\ell$ at each time step.
This implies that the norm of the encoded state $z(t)$ eventually diverges as time $t$ approaches infinity because $\ell > 0$.
Therefore, the encoded state causes overflow at a finite time step.

To avoid this problem, one can consider another controller with truncation~\cite{Kim2023-pk} given as
\begin{equation}
    \begin{aligned}
        \bar{x}(t + 1) &= \round*{ \qty( \bar{A} \bar{x}(t) + \bar{B} \bar{y}(t) ) / 2^\ell }, \\
        \bar{u}(t) &= \bar{C} \bar{x}(t) + \bar{D} \bar{y}(t),
    \end{aligned}
    \label{eq:encoded_controller}
\end{equation}
where $\bar{x}(t) \in \Z^n$ and $\bar{u}(t) \in \Z^m$ denote the encoded state and input, respectively, and $\bar{x}(0) = \bar{x}_0$.
In this controller, the least $\ell$ bits of the state update $\bar{A} \bar{x}(t) + \bar{B} \bar{y}(t)$ are truncated by $\round{ \cdot / 2^\ell }$.
Then, the bit length of the controller state does not increase because
\[
    \bar{x}(t + 1) \approx 2^\ell \qty( A x(t) + B y(t) ) = 2^\ell x(t + 1),
\]
for a sufficiently large $\ell$.
Additionally, in this case, the control input can be recovered using $u(t) \approx 2^{-2 \ell} \bar{u}(t)$.

This approximation suggests that truncation solves the problem of controller state overflow, making \eqref{eq:encoded_controller} suitable for encoding the controller \eqref{eq:controller}.
However, implementing the encoded controller using homomorphic encryption remains problematic.
The difficulty lies in the fact that homomorphic encryption is supposed to be unable to compute the truncation $\round{ \cdot / 2^\ell }$ on encrypted data directly~\cite{Kim2023-pk}.
This is a fundamental problem in the encryption of dynamic controllers using homomorphic encryption.
We address this problem by constructing an efficient protocol to compute the truncation using secret sharing, further discussed in~\secref{sec:two-party_computation_of_dynamic_controllers}.

\section{Secret Sharing and Beaver Triple}
\label{sec:secret_sharing_and_beaver_triple}

This section provides a brief review of secret sharing and its arithmetic~\cite{Cramer2015-sj,Beaver2007-kr}.

\subsection{Secret Sharing}

A $t$-out-of-$n$ secret sharing scheme splits a secret into $n$ shares, and the share can be recovered by collecting $t$ shares.
In this study, we use a standard $2$-out-of-$2$ secret sharing scheme over $\Z_q$ with a prime $q$ defined as follows.

\begin{definition}[$2$-out-of-$2$ secret sharing]
    A $2$-out-of-$2$ secret sharing scheme over $\Z_q$ is a tuple of polynomial-time algorithms $\Share$ and $\Reconst$ such that:
    \begin{itemize}
        \item $\share{m} \gets \Share(m)$:
        The share generation algorithm takes a secret $m \in \Z_q$ as input, randomly choose $r \in \Z_q$, and outputs shares $\share{m} = (\share{m}_1, \share{m}_2) = (r, m - r \bmod q)$.
        
        \item $m \gets \Reconst(\share{m})$:
        The reconstruction algorithm takes shares $\share{m} = (\share{m}_1, \share{m}_2)$ as input and outputs $m = \share{m}_1 + \share{m}_2 \bmod q$.
    \end{itemize}
\end{definition}

Each share $\share{m}_i$, $i \in \{1, 2\}$, is uniformly distributed over $\Z_q$ and thus provides no information about $m$ to the party $P_i$.
Nevertheless, any secret in $\Z_q$ can be correctly recovered by collecting the shares, i.e.,
\begin{equation}
    \Reconst(\Share(m)) = m
    \tag*{(P1)}
    \label{eq:correctness}
\end{equation}
for all $m \in \Z_q$.
Furthermore, the secret sharing scheme allows the constant addition, constant multiplication, and addition of shares.
For $k, x \in \Z_q$, the constant addition and multiplication of $k$ and $\share{x}$ are defined as $\share{x} + k \coloneqq (\share{x}_1 + k \bmod q, \share{x}_2)$ and $k \share{x} \coloneqq (k \share{x}_1 \bmod q, k \share{x}_2 \bmod q)$, respectively.
The addition of $\share{x}$ and $\share{y}$ with $y \in \Z_q$ is also defined as $\share{x} + \share{y} \coloneqq (\share{x}_1 + \share{y}_1 \bmod q, \share{x}_2 + \share{y}_2 \bmod q)$.
By definition, it holds that
\begin{align*}
    \share{x} + k &= \share{x + k \bmod q}, \tag*{(P2)} \label{eq:constant_addition} \\
    k \share{x} &= \share{k x \bmod q}, \tag*{(P3)} \label{eq:constant_multiplication} \\
    \share{x} + \share{y} &= \share{x + y \bmod q}. \tag*{(P4)} \label{eq:addition}
\end{align*}
We also define the subtractions, $\share{x} - k$ and $\share{x} - \share{y}$, similar to the additions.
The above computations can be performed without interaction, meaning that $x$ and $y$ are kept secret from both the parties.

\subsection{Beaver Triple}

Unlike addition and subtraction, the multiplication of shares cannot be naturally defined.
That is, $(\share{x}_1 \share{y}_1 \bmod q, \share{x}_2 \share{y}_2 \bmod q) \ne \share{xy \bmod q}$ because $xy \bmod q = (\share{x}_1 + \share{x}_2) (\share{y}_1 + \share{y}_2) \bmod q = \share{x}_1 \share{y}_1 + \share{x}_1 \share{y}_2 + \share{x}_2 \share{y}_1 + \share{x}_2 \share{y}_2 \bmod q$.
Although the terms $\share{x}_1 \share{y}_1$ and $\share{x}_2 \share{y}_2$ can be computed locally by each party, the computations of $\share{x}_1 \share{y}_2$ and $\share{x}_2 \share{y}_1$ require the exchange of shares, thereby compromising privacy.

To compute valid shares of $xy \bmod q$ privately, we use a technique based on Beaver triples~\cite{Beaver2007-kr}.
A Beaver triple over $\Z_q$ is a triplet $(\share{a}, \share{b}, \share{c})$ that consists of shares of random numbers $a, b, c \in \Z_q$ such that $c = ab \bmod q$.
With the triplet, the multiplication of shares is computed as follows~\cite{Beaver2007-kr,Cramer2015-sj}.
Suppose that the parties $P_i$, $i \in \{1, 2\}$ have $\share{x}_i$, $\share{y}_i$, and $(\share{a}_i, \share{b}_i, \share{c}_i)$.
Each party computes $\share{d}_i = \share{x}_i - \share{a}_i \bmod q$ and $\share{e}_i = \share{y}_i - \share{b}_i \bmod q$ locally.
The parties open $\share{d}_i$ and $\share{e}_i$ and compute $d = \Reconst(\share{d})$ and $e = \Reconst(\share{e})$, where $\share{d} = (\share{d}_1, \share{d}_2)$ and $\share{e} = (\share{e}_1, \share{e}_2)$.
Then, $P_1$ and $P_2$ respectively compute $\share{z}_1 = e \share{a}_1 + d \share{b}_1 + \share{c}_1 + d e \bmod q$ and $\share{z}_2 = e \share{a}_2 + d \share{b}_2 + \share{c}_2 \bmod q$ locally.
The shares $\share{z} = (\share{z}_1, \share{z}_2)$ satisfy $\share{z} = \share{xy \bmod q}$ because
\begin{align*}
    &\Reconst(\share{z}) = \share{z}_1 + \share{z}_2 \bmod q, \\
    &= e ( \share{a}_1 + \share{a}_2 ) + d ( \share{b}_1 + \share{b}_2 ) + ( \share{c}_1 + \share{c}_2 ) \\
    &\phantom{{}={}} + d e \bmod q, \\
    &= (y - b) a + (x - a) b + ab + (x - a) (y - b) \bmod q, \\
    &= xy \bmod q,
\end{align*}
where we used the properties from \refeq{eq:correctness} to \refeq{eq:addition}.

\proref{pro:mult} summarizes the above procedure for computing the multiplication of shares.
Using the $\Mult$ protocol, we define
\begin{equation}
    \share{x} \share{y} \coloneqq \Mult(\share{x}, \share{y}) = \share{xy \bmod q}.
    \tag*{(P5)}
    \label{eq:multiplication}
\end{equation}
Note that the protocol does not leak any information about $x$ and $y$ to the parties because of the randomness of $d = x - a \bmod q$ and $e = y - b \bmod q$.

\begin{figure}[t]
    \begin{algorithm}[H]
        \caption{Multiplication ($\Mult$)}
        \label{pro:mult}
        \begin{algorithmic}[1]
            \Require Shares $\share{x}$ and $\share{y}$ over $\Z_q$ and a Beaver triple $(\share{a}, \share{b}, \share{c})$ over $\Z_q$.
            \Ensure Shares $\share{z}$ over $\Z_q$ such that $\share{z} = \share{xy \bmod q}$.
            \State $P_i$: $\share{d} \gets \share{x} - \share{a}$, $\share{e} \gets \share{y} - \share{b}$
            \State $P_i$: $d \gets \Reconst(\share{d})$, $e \gets \Reconst(\share{e})$
            \State $P_i$: $\share{z} \gets e \share{a} + d \share{b} + \share{c} + d e$
        \end{algorithmic}
    \end{algorithm}
    \vspace{-8mm}
\end{figure}

For $M \in \Z_q^{d_1 \times d_2}$, denote $\share{M} \gets \Share(M)$ as a matrix in which the shares of each element of $M$ are arranged in the same manner, namely
\[
    \share{M} =
    \begin{bmatrix}
        \share{M_{11}}    & \cdots & \share{M_{1 d_2}} \\
        \vdots            & \ddots & \vdots \\
        \share{M_{d_1 1}} & \cdots & \share{M_{d_1 d_2}} \\
    \end{bmatrix}.
\]
Using the notation and properties \refeq{eq:addition} and \refeq{eq:multiplication}, the valid shares of matrix addition $X + Y$ for $X, Y \in \Z_q^{d_1 \times d_2}$ and multiplication $XZ$ for $X$ and $Z \in \Z_q^{d_2 \times d_3}$ can be computed as $\share{X} + \share{Y}$ and $\share{X} \share{Z}$, respectively.
The $(i, j)$ elements of the resultant matrices are $\qty( \share{X} + \share{Y} )_{ij} = \share{X_{ij} + Y_{ij} \bmod q}$ and $\qty( \share{X} \share{Z} )_{ij} = \share{X_{i1} Z_{1j} + \cdots + X_{i d_2} Z_{d_2 j} \bmod q}$.
Here, the matrix multiplication requires distinct $d_1 d_2 d_3$ Beaver triples.
In what follows, the $i$th shares of $\share{M}$ are denoted as
\[
    \share{M}_i =
    \begin{bmatrix}
        \share{M_{11}}_i    & \cdots & \share{M_{1 d_2}}_i \\
        \vdots              & \ddots & \vdots \\
        \share{M_{d_1 1}}_i & \cdots & \share{M_{d_1 d_2}}_i \\
    \end{bmatrix}
\]
for simplicity of notation.
In addition, the reconstruction algorithm $\Reconst$ is supposed to be performed for each element of the matrix, namely $\Reconst(\share{M}) = M$.

\begin{remark}
    Under \asmref{asm:adversary}, the $2$-out-of-$2$ secret sharing scheme satisfies perfect security~\cite{Shannon1949-mc}.
    This security means that parties can learn nothing from their own shares, even if their computational capabilities are unlimited.
\end{remark}

\section{Two-Party Computation of Dynamic Controllers}
\label{sec:two-party_computation_of_dynamic_controllers}

In this section, we propose an implementation of the encoded controller \eqref{eq:encoded_controller} over the two-party configuration shown in \figref{fig:controls}\subref{fig:2pc} using the $2$-out-of-$2$ secret sharing scheme.
The computation of \eqref{eq:encoded_controller} involves addition, multiplication, and truncation.
The addition and multiplication of shares are supported by the arithmetic of the secret sharing scheme, \refeq{eq:addition} and \refeq{eq:multiplication}, introduced in the previous section.
However, the truncation of shares cannot be computed directly from \refeq{eq:constant_addition} to \refeq{eq:multiplication}.

To enable this, we first introduce a truncation protocol originally proposed in \cite{Escudero2020-ih} and slightly modified here to suit our setting, along with its corresponding security notion.
We then apply the protocol to realize \eqref{eq:encoded_controller} over shares and analyze the control performance achievable by the proposed protocol.

\subsection{Truncation Protocol}

An important insight behind the truncation protocol in~\cite{Escudero2020-ih} is that, by definition of the modulo operation, the truncation of $\ell$ bits of $m \in \Z$ can be represented by
\begin{align*}
    \round*{ \frac{ m }{ 2^\ell } }
    &= \frac{ m - ( m \bmod 2^\ell ) }{ 2^\ell }, \\
    &= \inv(2^\ell, q) ( m - ( m \bmod 2^\ell ) ) \bmod q,
\end{align*}
if $q$ is a prime such that $q / 2 > \abs*{ \round{ m / 2^\ell } }$ and $q / 2 > 2^\ell$.
Notice that $\inv(2^\ell, q)$ is the modular inverse of $2^\ell$ modulo $q$ and can be computed using the extended Euclidean algorithm.
Here, let $q$ be the prime modulus of the secret sharing scheme in the previous section, and $\lambda \in \N$ be an appropriately chosen security parameter. 
Suppose that the bit length of a message $m$ is at most $\kappa = \floor{ \log_2 q } - \lambda - 1$ and $\ell$ is less than $\kappa$.
To simplify our protocol, we assume that $\ell$ is public, which is acceptable because it specifies only the bit length of fractional part.
Based on the above observation, the $\Trunc$ protocol shown in \proref{pro:truncation} computes the truncation of $\ell$ bits of a message $m \in \ZZ{\kappa}$ over shares.

\begin{figure}[t]
    \begin{algorithm}[H]
        \caption{Truncation ($\Trunc$)}
        \label{pro:truncation}
        \begin{algorithmic}[1]
            \Require Shares $\share{m}$ over $\Z_q$ with $m \in \ZZ{\kappa}$, shares $(\share{r}, \share{r'})$ over $\Z_q$ with random numbers $r \in \ZZ{\kappa - \ell + \lambda}$ and $r' \in \ZZ{\ell}$, and bit length $\ell \in \N$, where $\kappa = \floor{ \log_2 q } - \lambda - 1 > \ell$, and $\lambda \in \N$ is a security parameter.
            \Ensure Shares $\share{m'}$ over $\Z_q$ such that $m' = \round{ m / 2^\ell } + w$, where $w \in \{-1, 0, 1\}$.
            \State $P_i$: $\share{m_r} \gets \share{m} + 2^\ell \share{r} + \share{r'} + 2^{\ell - 1}$
            \State $P_2$: Send $\share{m_r}_2$ to $P_1$
            \State $P_1$: $m_r \gets \Reconst(\share{m_r})$
            \State $P_i$: $\share{m'} \gets \inv(2^\ell, q) ( \share{m} + \share{r'} - ( m_r - 2^{\ell - 1} \bmod 2^\ell ) )$
        \end{algorithmic}
    \end{algorithm}
    \vspace{-8mm}
\end{figure}

The protocol takes as input not only $\share{m} = (\share{m}_1, \share{m}_2)$ and $\ell$ but also shares $\share{r} = (\share{r}_1, \share{r}_2)$ and $\share{r'} = (\share{r'}_1, \share{r'}_2)$ with random numbers $r \in \ZZ{\kappa - \ell + \lambda}$ and $r' \in \ZZ{\ell}$.
Suppose that parties $P_i$, $i \in \{1, 2\}$ have $\share{m}_i$, $\share{r}_i$, and $\share{r'}_i$.
The parties compute $\share{m_r} = \share{m} + 2^\ell \share{r} + \share{r'} + 2^{\ell - 1}$ locally.
The party $P_2$ transmits $\share{m_r}_2$ to $P_1$, and $P_1$ reconstructs $m_r = \Reconst(\share{m_r})$, where $\share{m_r} = (\share{m_r}_1, \share{m_r}_2)$.
Then, $P_1$ and $P_2$ locally compute $\share{m'}_1 = \inv(2^\ell, q) ( \share{m}_1 + \share{r'}_1 - (m_r - 2^{\ell - 1} \bmod 2^\ell) ) \bmod q$ and $\share{m'}_2 = \inv(2^\ell, q) ( \share{m}_2 + \share{r'}_2 ) \bmod q$, respectively.
Consequently, $\share{m'} = (\share{m'}_1, \share{m'}_2)$ becomes shares of $\round{ m / 2^\ell } + w$ for all $m \in \ZZ{\kappa}$, where $w \in \{-1, 0, 1\}$.

\begin{lemma}
\label{lem:truncation}
    Suppose $q$ is a prime.
    Let $\kappa, \ell, \lambda \in \N$ such that $\kappa = \floor{ \log_2 q } - \lambda - 1 > \ell$.
    It holds that, for all $m \in \ZZ{\kappa}$,
    \[
        \Reconst \qty( \Trunc \qty( \share{ m }, \ell ) ) = \round*{ \frac{ m }{ 2^\ell } } + w,
    \]
    where $\share{m} \gets \Share(m)$ and $w \in \{-1, 0, 1\}$.
\end{lemma}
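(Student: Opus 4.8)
The plan is to follow the value reconstructed from $\share{m'}$ through the three lines of \proref{pro:truncation}, setting up two elementary ``no-wraparound'' facts so that each reduction modulo $q$ acts like ordinary integer arithmetic.

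First I would verify that line 1 does not overflow modulo $q$. Using $m \in \ZZ{\kappa}$, $r \in \ZZ{\kappa - \ell + \lambda}$ and $r' \in \ZZ{\ell}$, the triangle inequality bounds $\abs{m + 2^\ell r + r' + 2^{\ell-1}}$ by $2^{\kappa-1} + 2^{\kappa+\lambda-1} + 2^\ell$, which is at most $2^{\kappa+\lambda}$ because $\ell < \kappa$ and $\lambda \ge 1$; and since $\kappa + \lambda + 1 = \floor{\log_2 q}$ with $q$ an odd prime, $2^{\kappa+\lambda} < q/2$. Hence the integer $m + 2^\ell r + r' + 2^{\ell-1}$ already lies in $\Z_q$, so $m_r = \Reconst(\share{m_r})$ equals it over $\Z$, giving $m_r - 2^{\ell-1} = m + 2^\ell r + r'$. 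Since adding an integer multiple of $2^\ell$ does not change a residue modulo $2^\ell$, the public value used in line 3 is $\mu \coloneqq (m_r - 2^{\ell-1}) \bmod 2^\ell = (m + r') \bmod 2^\ell$.

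Next I would reconstruct $\share{m'}$. As $\mu \in \ZZ{\ell} \subseteq \Z_q$, the homomorphic properties (P2)--(P4) of the sharing scheme give $\Reconst(\share{m'}) = \inv(2^\ell, q)\,(m + r' - \mu) \bmod q$. Applying the identity $x - (x \bmod 2^\ell) = 2^\ell \round{x/2^\ell}$ recalled just before \proref{pro:truncation}, now with $x = m + r'$, turns this into $\inv(2^\ell, q) \cdot 2^\ell \round{(m+r')/2^\ell} \bmod q$; and since $\abs{m + r' - \mu} \le 2^{\kappa-1} + 2^\ell \le 2^\kappa$ forces $\abs{\round{(m+r')/2^\ell}} \le 2^{\kappa-\ell} < q/2$, the relation $\inv(2^\ell, q) \cdot 2^\ell \equiv 1 \pmod q$ collapses it to $\Reconst(\share{m'}) = \round{(m+r')/2^\ell}$.

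Finally I would compare this with $\round{m/2^\ell}$. Writing $\round{y} = \floor{y + 1/2}$, the difference equals $\floor{u + \delta} - \floor{u}$ with $u = m/2^\ell + 1/2$ and $\delta = r'/2^\ell$; since $\abs{r'} < 2^{\ell-1}$ gives $\abs{\delta} < 1/2 < 1$, this difference lies in $\{-1, 0, 1\}$, which is exactly the claimed $w$. I expect the overflow bound in line 1 to be the main obstacle, as it is the only place where the precise choice $\kappa = \floor{\log_2 q} - \lambda - 1$ is used, and one must keep the half-open symmetric ranges $\ZZ{\cdot}$ straight and invoke both $\ell < \kappa$ and $\lambda \ge 1$ to keep $m + 2^\ell r + r' + 2^{\ell-1}$ strictly within $(-q/2, q/2)$.
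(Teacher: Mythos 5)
Your proposal is correct and follows essentially the same route as the paper's proof: establish that $m_r$ reconstructs without wraparound modulo $q$, identify the opened value as $(m+r') \bmod 2^\ell$, cancel $\inv(2^\ell,q)\cdot 2^\ell$ using a magnitude bound below $q/2$, and absorb the residual into $w \in \{-1,0,1\}$. The only (cosmetic) difference is where the error is extracted—the paper takes $w$ as the carry in $(m \bmod 2^\ell) + r' \bmod 2^\ell$, whereas you take it as $\round{(m+r')/2^\ell} - \round{m/2^\ell}$—and you spell out the overflow bounds that the paper leaves implicit.
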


\begin{proof}
    See \appref{app:proof_truncation}.
\end{proof}

The lemma shows that the $\Trunc$ protocol truncates the least $\ell$ bits of $m \in \ZZ{\kappa}$ with a $1$-bit error $w$.
This error occurs only when $m + r' \notin \ZZ{\ell}$ and thus can be removed by an additional protocol to detect $m + r' \notin \ZZ{\ell}$ over shares~\cite{Escudero2020-ih}.
The following section tolerates the error to avoid increasing the computational and communication costs and applies the truncation protocol to approximate and implement the encoded controller \eqref{eq:encoded_controller} to run it over shares.

\begin{remark}
    The security of $\Trunc$ is identical to that of the truncation protocol in~\cite{Escudero2020-ih}.
    Through protocol execution, $P_2$ does not receive any messages from $P_1$ and thus learns no information.
    The only information leaked into $P_1$ is $m_r = m + 2^\ell r + r' + 2^{\ell - 1} \bmod q$.
    Under the conditions of bit lengths of $m$, $r$, and $r'$, it holds that $m_r = m + r'' \in \Z_q$, where $r'' = 2^\ell r + r' + 2^{\ell - 1}$ is uniformly distributed over $\ZZ{\kappa + \lambda}$.
    Intuitively, $m_r$ is a masked message of $m$ by $r''$, and thus $P_1$ cannot learn any information from $m_r$.
    Indeed, the statistical distance between $m + r''$ and $r''$, $\frac{ 1 }{ 2 } \sum_{x \in \Z_q} \abs*{ \Pr[ m + r'' = x ] - \Pr[ r'' = x ] }$, is less than $2^{-\lambda}$ and so is negligible in $\lambda$.
    Therefore, the $\Trunc$ protocol achieves statistical security, which is a relaxation of the perfect security.
    However, statistical security is still valid for unconditional adversaries.
    This implies that it is stronger than computational security achieved by homomorphic encryption~\cite{Regev2009-ys,Brakerski2014-ny,Fan2012-jt,Gentry2013-eq,Cheon2017-pd}.
    Note that computational security covers only conditional adversaries whose computational capability is limited to polynomial time~\cite{Katz2014-kb}.
\end{remark}

\subsection{Controller Computation Protocol}

With \proref{pro:mult} and \proref{pro:truncation}, we propose \proref{pro:controller}, which enables running a dynamic controller over the secret sharing scheme without state decryption and input re-encryption.
For convenience, we define the plant controlled by the proposed protocol instead of the controller \eqref{eq:controller} as
\begin{equation}
    \begin{aligned}
        \hat{x}_p(t + 1) &= A_p \hat{x}_p(t) + B_p \hat{u}(t), \\
        \hat{y}(t) &= C_p \hat{x}_p(t),
    \end{aligned}
    \label{eq:plant_2}
\end{equation}
where $\hat{x}_p(t) \in \R^{n_p}$ is the state, $\hat{u}(t) \in \R^m$ is the input provided by the protocol, $\hat{y}(t) \in \R^p$ is the output, and $\hat{x}_p(0) = x_p(0) = x_{p,0}$ is the initial state.
In that case, the encoded plant output is redefined as
\begin{equation}
    \bar{y}(t) \coloneqq \round*{2^\ell \hat{y}(t)} \in \Z^p.
    \label{eq:encoded_output}
\end{equation}
In what follows, we refer to the plant \eqref{eq:plant_2} with a sensor and actuator as a client and two servers computing a dynamic controller as parties $P_1$ and $P_2$.

\begin{figure}[t]
    \begin{algorithm}[H]
        \caption{Client-aided two-party controller computation}
        \label{pro:controller}
        \begin{algorithmic}[1]
            \Require Controller parameters $A \in \QQ{k}{\ell}^{n \times n}$, $B \in \QQ{k}{\ell}^{n \times p}$, $C \in \QQ{k}{\ell}^{m \times n}$, $D \in \QQ{k}{\ell}^{m \times p}$, initial state $x_0 \in \QQ{k}{\ell}^n$, plant output $\hat{y}(t) \in \R^p$, bit lengths $k, \ell \in \N$, and modulus $q$
            \Ensure Control input $\hat{u}(t) \in \R^m$
            \State $\triangleright$ Offline
            \State C: Compute $\share[\big]{\tilde{A}}$, $\share[\big]{\tilde{B}}$, $\share[\big]{\tilde{C}}$, $\share[\big]{\tilde{D}}$, and $\share[\big]{\tilde{x}(0)}$ by \eqref{eq:parameter_share}
            \State C: Send $\share[\big]{\tilde{A}}_i$, $\share[\big]{\tilde{B}}_i$, $\share[\big]{\tilde{C}}_i$, $\share[\big]{\tilde{D}}_i$, and $\share[\big]{\tilde{x}(0)}_i$ to $P_i$
            \State $\triangleright$ Online
            \ForAll{$t \in \N_0$}
                \State C: Compute $\share{\tilde{y}(t)}$ by \eqref{eq:output_share}
                \State C: Generate auxiliary inputs $\aux_1$ and $\aux_2$
                \State C: Send $\share{\tilde{y}(t)}_i$ and $\aux_i$ to $P_i$
                \State $P_i$: Compute $\share{\tilde{x}(t + 1)}$ and $\share{\tilde{u}(t)}$ by \eqref{eq:2pc_controller}
                \State $P_i$: Send $\share{\tilde{u}(t)}_i$ to the client
                \State C: Compute $\hat{u}(t)$ by \eqref{eq:approximate_input}
            \EndFor
        \end{algorithmic}
    \end{algorithm}
    \vspace{-8mm}
\end{figure}

To implement \eqref{eq:encoded_controller} with the secret sharing scheme, the client splits its controller parameters and initial state into shares using the share generation algorithm,
\begin{equation}
    \begin{alignedat}{3}
        \share[\big]{\tilde{A}} &\gets \Share \qty\big(\tilde{A}), &\ \share[\big]{\tilde{B}} &\gets \Share \qty\big(\tilde{B}), &\ \share[\big]{\tilde{C}} &\gets \Share \qty\big(\tilde{C}), \\
        \share[\big]{\tilde{D}} &\gets \Share \qty\big(\tilde{D}), &\ \share[\big]{\tilde{x}(0)} &\gets \Share \qty\big(\tilde{x}_0), & &
    \end{alignedat}
    \label{eq:parameter_share}
\end{equation}
where
\[
    \begin{alignedat}{3}
        \tilde{A} &\coloneqq \bar{A} \bmod q, &\quad \tilde{B} &\coloneqq \bar{B} \bmod q, &\quad \tilde{C} &\coloneqq \bar{C} \bmod q, \\
        \tilde{D} &\coloneqq \bar{D} \bmod q, &\quad \tilde{x}_0 &\coloneqq \bar{x}_0 \bmod q, & &
    \end{alignedat}
\]
and $\bar{A}$, $\bar{B}$, $\bar{C}$, $\bar{D}$, and $\bar{x}_0$ are the encoded controller parameters and initial state defined in \eqref{eq:encoded_parameters}. 
In the offline phase, as shown in \figref{fig:system}\subref{fig:offline}, the client sends the $i$th shares of the parameters $\share[\big]{\tilde{A}}_i, \share[\big]{\tilde{B}}_i, \share[\big]{\tilde{C}}_i, \share[\big]{\tilde{D}}_i$ and initial state $\share[\big]{\tilde{x}(0)}_i$ to the party $P_i$.
The parties store the received shares individually and use them in the online phase later.

In the online phase, the parties run an approximate dynamic controller for \eqref{eq:encoded_controller} as shown in \figref{fig:system}\subref{fig:online}.
For each time $t \in \N_0$, the client reads the plant output $\hat{y}(t)$ in \eqref{eq:plant_2} and generates its shares by
\begin{equation}
    \share{\tilde{y}(t)} \gets \Share \qty(\tilde{y}(t)), \quad \tilde{y}(t) = \bar{y}(t) \bmod q \in \Z_q^p,
    \label{eq:output_share}
\end{equation}
where $\bar{y}(t)$ is the encoded output defined in \eqref{eq:encoded_output}.
The $i$th shares of the output $\share{\tilde{y}(t)}_i$ are sent to the party $P_i$.
Simultaneously, the client generates $(n + m) (n + p)$ Beaver triples $(\share{a_j}, \share{b_j}, \share{c_j})$ over $\Z_q$ such that $c_j = a_j b_j \bmod q$ for every $j = 1, \dots, (n + m) (n + p)$ and shares $(\share{r_h}, \share{r'_h})$ over $\Z_q$ with random numbers $r_h \in \ZZ{\kappa - \ell + \lambda}$ and $r'_h \in \ZZ{\ell}$ for every $h = 1, \dots, n$.
The client then sends the $i$th shares
\begin{equation}
    \begin{aligned}
    \aux_i
    &\coloneqq \{ ( \share{a_j}_i, \share{b_j}_i, \share{c_j}_i ), ( \share{r_h}_i, \share{r'_h}_i ) \\
    &\phantom{{}\coloneqq{}} \mid j = 1, \dots, (n + m) (n + p), h = 1, \dots, n \}
    \end{aligned}
    \label{eq:auxiliary_inputs}
\end{equation}
to $P_i$ as auxiliary inputs.
The parties compute the shares of state update $\tilde{x}(t + 1) \in \Z^n$ and encoded input $\tilde{u}(t) \in \Z^m$ as
\begin{equation}
    \begin{aligned}
        \share[\big]{\tilde{x}(t + 1)} &\gets \Trunc \qty( \share[\big]{\tilde{A}} \share[\big]{\tilde{x}(t)} + \share[\big]{\tilde{B}} \share[\big]{\tilde{y}(t)}, \ell ), \\
        \share[\big]{\tilde{u}(t)} &= \share[\big]{\tilde{C}} \share[\big]{\tilde{x}(t)} + \share[\big]{\tilde{D}} \share[\big]{\tilde{y}(t)},
    \end{aligned}
    \label{eq:2pc_controller}
\end{equation}
where the $\Trunc$ protocol is applied for each element of $\share[\big]{\tilde{A}} \share[\big]{\tilde{x}(t)} + \share[\big]{\tilde{B}} \share[\big]{\tilde{y}(t)}$.
Note that the multiplications of shares for the state and input computations consume $n (n + p)$ and $m (n + p)$ Beaver triples, respectively.
Each party $P_i$ returns its shares of the control input $\share{\tilde{u}(t)}_i$ to the client separately.
Upon receiving the shares, the client reconstructs and decodes an approximate control input by
\begin{equation}
    \hat{u}(t) = 2^{-2 \ell} \tilde{u}(t), \quad \tilde{u}(t) = \Reconst \qty(\share{\tilde{u}(t)}).
    \label{eq:approximate_input}
\end{equation}

\begin{figure}[t]
    \centering
    \subfigure[Offline phase.]{\includegraphics[scale=1]{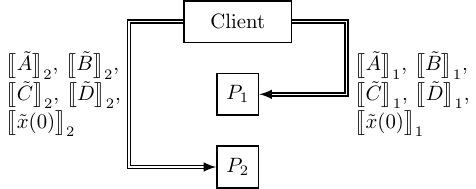}\label{fig:offline}}
    \subfigure[Online phase.]{\includegraphics[scale=1]{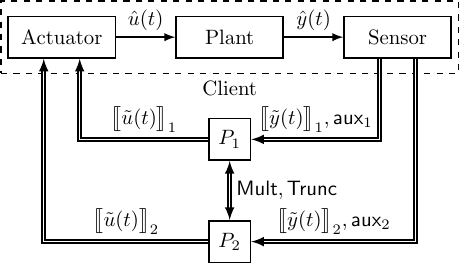}\label{fig:online}}
    \caption{Client-aided two-party dynamic controller computation.}
    \label{fig:system}
    \vspace{-5mm}
\end{figure}

We now proceed to show the main results of this study.
The lemma below guarantees that the proposed controller \eqref{eq:2pc_controller} can operate for an infinite time horizon without causing state overflow and does not require state decryption or input re-encryption.

\begin{lemma}
\label{lem:modulus_condition}
    Consider the plant \eqref{eq:plant_2} and controller \eqref{eq:2pc_controller} under \asmref{asm:fixed-point} and \asmref{asm:stability}.
    Define $\Gamma \coloneqq [ (B_p D)^\top \ B^\top ]^\top$, $\alpha \coloneqq \norm{ C_p }_\infty + 3 / 2$, and $\beta \coloneqq 2^\ell \norm{ \chi_0 }_\infty + \norm{ \Gamma }_\infty / 2 + 3 / 2$, where $\chi_0 = [ x_{p,0}^\top \ x_0^\top ]^\top$.
    If $q$ is a prime, and
    \begin{equation}
        \log_2 q > k + \lambda + 2 + \floor*{ \log_2 \max\{ n, p \} \alpha \beta \frac{ c }{ 1 - \gamma} },
        \label{eq:modulus_condition}
    \end{equation}
    then it holds that
    \begin{align*}
        \tilde{x}(t + 1) &= \round*{ \qty( \bar{A} \tilde{x}(t) + \bar{B} \bar{y}(t) ) / 2^\ell } + w(t), \quad \tilde{x}(0) = \bar{x}_0, \\
        \tilde{u}(t) &= \bar{C} \tilde{x}(t) + \bar{D} \bar{y}(t),
    \end{align*}
    for all $t \in \N_0$, where $\tilde{x}(t) = \Reconst \qty(\share{\tilde{x}(t)})$, $\tilde{u}(t) = \Reconst \qty(\share{\tilde{u}(t)})$, $w(t) \in \{-1, 0, 1\}^n$, and $c \ge 1$ and $\gamma \in (0, 1)$ are constants such that $\norm*{ \Phi^t } \le c \gamma^t$ for all $t \in \N_0$.
\end{lemma}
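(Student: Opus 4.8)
The plan is to track the exact (reconstructed) values $\tilde{x}(t)$ and $\tilde{u}(t)$ produced by \proref{pro:controller} and show two things: (i) they obey the claimed integer recursion with a bounded error $\delta(t)$, and (ii) the numbers involved never exceed $\ZZ{\kappa}$ in magnitude, so that \lemref{lem:truncation} applies at every step and no wraparound modulo $q$ occurs. I would proceed by induction on $t$. The base case is immediate since $\tilde{x}(0) = \bar{x}_0$ by construction and $\norm{\bar{x}_0}_\infty = 2^\ell \norm{x_0}_\infty$, which we can bound using $\norm{\chi_0}_\infty$. For the inductive step, assume $\tilde{x}(t)$ has the form given and the bit-length bound holds. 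First I would combine the two multiplication properties \refeq{eq:multiplication} and the truncation \lemref{lem:truncation}: since the multiplications of shares compute valid shares of $\bar{A}\tilde{x}(t) + \bar{B}\bar{y}(t) \bmod q$ exactly (no truncation error there, only the modular reduction), applying $\Trunc(\cdot,\ell)$ elementwise gives $\tilde{x}(t+1) = \round{(\bar{A}\tilde{x}(t)+\bar{B}\bar{y}(t))/2^\ell} + w(t)$ with $\norm{w(t)}_\infty \le 1$, provided the argument lies in $\ZZ{\kappa}$. Writing $\round{\cdot} = (\cdot)/2^\ell - (\text{fractional correction})$ and using $\bar{A} = 2^\ell A$, $\bar{B} = 2^\ell B$, this collapses to $\tilde{x}(t+1) = A\tilde{x}(t) + B\bar{y}(t) + \delta(t)$ where $\delta(t)$ absorbs the rounding-to-nearest error (bounded by $1/2$ per coordinate) and $w(t)$ (bounded by $1$), so $\norm{\delta(t)}_\infty \le 3/2$. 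The input equation $\tilde{u}(t) = \bar{C}\tilde{x}(t) + \bar{D}\bar{y}(t)$ holds with no truncation at all, so it is exact modulo the no-overflow condition.

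The main work, and the main obstacle, is establishing the uniform bound that keeps every quantity inside $\ZZ{\kappa}$ for all $t$, which is where \asmref{asm:stability} and the constants $\alpha$, $\beta$, $c$, $\gamma$ enter. The idea is to compare $\tilde{x}(t)$ with the true closed-loop trajectory. Stacking the plant state $\hat{x}_p(t)$ and $2^{-\ell}\tilde{x}(t)$ into a vector and using \eqref{eq:plant_2} together with the perturbed controller recursion from part (i), I would derive a recursion of the form $\hat{\chi}(t+1) = \Phi \hat{\chi}(t) + (\text{perturbation})$, where the perturbation comes from $\delta(t)$ and from the rounding in $\bar{y}(t) = \round{2^\ell \hat{y}(t)}$ (which differs from $2^\ell \hat{y}(t)$ by at most $1/2$ per coordinate, contributing a term scaled by $\Gamma = [(B_pD)^\top\ B^\top]^\top$). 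Then $\norm{\hat{\chi}(t)}_\infty$ is bounded by the variation-of-constants sum $\sum_{s=0}^{t-1} \norm{\Phi^{t-1-s}}_\infty \cdot (\text{per-step perturbation bound}) + \norm{\Phi^t}_\infty \norm{\chi_0}_\infty$; using $\norm{\Phi^t} \le c\gamma^t$ (and converting the $2$-norm bound to an $\infty$-norm bound at the cost of dimension factors, which is where $\max\{n,p\}$ shows up) the geometric series sums to something proportional to $c/(1-\gamma)$. This yields $\norm{2^{-\ell}\tilde{x}(t)}_\infty \lesssim \norm{\chi_0}_\infty + \text{const}$, and hence $\norm{\tilde{x}(t)}_\infty$ is $O(2^\ell \cdot \text{const})$; the precise constants $\alpha = \norm{C_p}_\infty + 3/2$ and $\beta = 2^\ell\norm{\chi_0}_\infty + \norm{\Gamma}_\infty/2 + 3/2$ are chosen exactly so that the bound on the truncation argument $\bar{A}\tilde{x}(t) + \bar{B}\bar{y}(t)$ (and on $\tilde{u}(t)$, which involves $\bar{C}$, $\bar{D}$, i.e.\ another factor $2^\ell$ and $\norm{\hat{y}(t)} \le \norm{C_p}\norm{\hat{x}_p(t)}$) comes out to at most $\max\{n,p\}\,\alpha\beta\,c/(1-\gamma)$ times $2^{2\ell}$-ish scaling. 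Matching the $k$-bit parameter sizes and the $2^\ell$ encoding factors against $\kappa = \floor{\log_2 q} - \lambda - 1$ gives precisely condition \eqref{eq:modulus_condition}; the $+k+\lambda+2$ accounts for the $k$-bit controller entries, the security slack $\lambda$, and small constant headroom.

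I would organize the induction so that the no-overflow bound is part of the inductive hypothesis: assuming $\norm{\tilde{x}(t)}_\infty$ satisfies the closed-loop-derived bound, one checks $\bar{A}\tilde{x}(t) + \bar{B}\bar{y}(t) \in \ZZ{\kappa}$ (so \lemref{lem:truncation} is legitimately invoked and all the $\bmod\ q$ reductions in \eqref{eq:2pc_controller} are vacuous), derives the recursion of part (i), and then propagates the bound to $t+1$ via the $\Phi$-contraction estimate. The delicate points to get right are: the direction of the inequality when passing between $\ell_2$ and $\ell_\infty$ matrix norms (this is what forces the $\max\{n,p\}$ factor and must be done conservatively), and bookkeeping the three separate sources of the $3/2$ in $\delta(t)$ and in $\alpha$, $\beta$ — namely output rounding ($1/2$), truncation rounding ($1/2$), and the $\{-1,0,1\}$ truncation error ($1$). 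Everything else is routine: linearity of $\Reconst$ over the shares, and the fact that $\share{\tilde{A}}\share{\tilde{x}(t)}$ etc.\ reconstruct exactly to the integer matrix products reduced mod $q$, which is immediate from \refeq{eq:multiplication} applied entrywise.
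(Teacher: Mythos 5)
Your proposal follows essentially the same route as the paper's proof: an induction in which the no-overflow condition ($\tilde{A}\tilde{x}(t)+\tilde{B}\tilde{y}(t)\in\ZZ{\kappa}^n$, etc.) is part of the hypothesis, \lemref{lem:truncation} yields the perturbed recursion with $\delta(t)=e(t)+w(t)$ and $\norm{\delta(t)}_\infty\le 3/2$, and the reconstructed state is compared to the closed-loop dynamics $\hat{\chi}(t+1)=\Phi\hat{\chi}(t)+\Gamma d_1(t)+\Xi d_2(t)$ with the geometric bound $c/(1-\gamma)$ to certify $\norm{\tilde{x}(\tau)}_\infty<\beta c/(1-\gamma)$ and $\norm{\bar{y}(\tau)}_\infty<\alpha\beta c/(1-\gamma)$. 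The only minor bookkeeping difference is that in the paper the factor $\max\{n,p\}$ arises from the row-sum bound $\norm{\tilde{A}\tilde{x}+\tilde{B}\tilde{y}}_\infty\le 2^{k-1}\max\{n,p\}(\norm{\tilde{x}}_\infty+\norm{\bar{y}}_\infty)$ on the $k$-bit parameter matrices rather than from a $2$-norm/$\infty$-norm conversion, but this does not change the argument.
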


\begin{proof}
    See \appref{app:proof_modulus_condition}.
\end{proof}

\lemref{lem:modulus_condition} implies that the reconstruction of the proposed controller \eqref{eq:2pc_controller} is equivalent to the encoded controller \eqref{eq:encoded_controller} with the error $w(t)$ induced by the $\Trunc$ protocol.
Based on this result, the following theorem states that the control performance of the proposed controller can be arbitrarily close to that of the original controller \eqref{eq:controller} by choosing the bit length of the fractional part $\ell$ to be sufficiently large.

\begin{theorem}
\label{thm:fractional_condition}
    Consider the plants \eqref{eq:plant} and \eqref{eq:plant_2} and controllers \eqref{eq:controller} and \eqref{eq:2pc_controller} under \asmref{asm:fixed-point} and \asmref{asm:stability}.
    Suppose \eqref{eq:modulus_condition} is satisfied.
    For every $\epsilon > 0$, if
    \begin{equation}
        \ell \ge \log_2 \frac{ c \epsilon^{-1} }{ 1 - \gamma } \qty( \frac{ \sqrt{p} }{ 2 } \qty( \norm{ \Gamma } \norm{ \Upsilon } + \norm{ D } ) + 2 \sqrt{n} \norm{ \Upsilon } ),
        \label{eq:fractional_condition}
    \end{equation}
    the controllers guarantee \eqref{eq:error} for all $t \in \N_0$, where $\Upsilon \coloneqq [ D C_p \ C ]$, and $\Gamma$, $c$, and $\gamma$ are defined in \lemref{lem:modulus_condition}.
\end{theorem}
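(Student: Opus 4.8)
The plan is to propagate the quantization and $1$-bit truncation errors produced by \proref{pro:controller} through the closed loop and show they contribute only an $O(2^{-\ell})$ perturbation. First I would decode the controller: set $\xi(t)\coloneqq 2^{-\ell}\tilde x(t)$ and let $\sigma(t)\coloneqq\bar y(t)-2^\ell\hat y(t)$ be the output-quantization error, so $\norm{\sigma(t)}_\infty\le 1/2$ by \eqref{eq:encoded_output}. Dividing the reconstructed state recursion of \lemref{lem:modulus_condition} by $2^\ell$ and substituting $\bar y(t)=2^\ell\hat y(t)+\sigma(t)$ and $\hat y(t)=C_p\hat x_p(t)$ gives
\[
    \xi(t+1)=A\xi(t)+BC_p\hat x_p(t)+2^{-\ell}\bigl(B\sigma(t)+\delta(t)\bigr),
\]
with $\xi(0)=x_0=x(0)$ since $x_0\in\QQ{k}{\ell}^n$ makes $\tilde x(0)=2^\ell x_0$ exact. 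Likewise $\hat u(t)=2^{-2\ell}\tilde u(t)=C\xi(t)+DC_p\hat x_p(t)+2^{-\ell}D\sigma(t)=\Upsilon\hat\chi(t)+2^{-\ell}D\sigma(t)$, where $\hat\chi(t)\coloneqq[\hat x_p(t)^\top\ \xi(t)^\top]^\top$.

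Second, I would assemble the perturbed closed loop. Substituting $\hat u(t)$ into the plant \eqref{eq:plant_2} and stacking with the $\xi$-recursion shows $\hat\chi(t+1)=\Phi\hat\chi(t)+d(t)$ with the \emph{same} $\Phi$ as in \eqref{eq:closed-loop}, where $d(t)$ has plant block $2^{-\ell}B_pD\sigma(t)$ and controller block $2^{-\ell}(B\sigma(t)+\delta(t))$; equivalently $\norm{d(t)}\le 2^{-\ell}(\norm{\Gamma}\norm{\sigma(t)}+\norm{\delta(t)})$ by the triangle inequality, since the $\sigma$-part of $d(t)$ is $2^{-\ell}\Gamma\sigma(t)$. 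The nominal closed loop is $\chi(t+1)=\Phi\chi(t)$ with $\chi(t)\coloneqq[x_p(t)^\top\ x(t)^\top]^\top$ and $\chi(0)=\hat\chi(0)$, so the error $\eta(t)\coloneqq\hat\chi(t)-\chi(t)$ satisfies $\eta(0)=0$, $\eta(t+1)=\Phi\eta(t)+d(t)$, hence $\eta(t)=\sum_{s=0}^{t-1}\Phi^{t-1-s}d(s)$.

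Third, I would estimate the norms. Using $\norm{\Phi^t}\le c\gamma^t$ (such $c\ge 1$, $\gamma\in(0,1)$ exist by \asmref{asm:stability}, per \lemref{lem:modulus_condition}) and $\sum_{j\ge 0}\gamma^j=1/(1-\gamma)$ yields $\norm{\eta(t)}\le\frac{c}{1-\gamma}\sup_s\norm{d(s)}$; and $\norm{d(s)}\le 2^{-\ell}\bigl(\frac{\sqrt p}{2}\norm{\Gamma}+\frac{3}{2}\sqrt n\bigr)$ from $\norm{\sigma(s)}\le\frac{\sqrt p}{2}$ and $\norm{\delta(s)}\le\frac{3}{2}\sqrt n$ (the latter from $\norm{\delta(s)}_\infty\le 3/2$ in \lemref{lem:modulus_condition}). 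Since $u(t)=Cx(t)+Dy(t)=\Upsilon\chi(t)$, we get $u(t)-\hat u(t)=-\Upsilon\eta(t)-2^{-\ell}D\sigma(t)$, so
\[
    \norm{u(t)-\hat u(t)}\le 2^{-\ell}\Bigl[\frac{c}{1-\gamma}\norm{\Upsilon}\Bigl(\frac{\sqrt p}{2}\norm{\Gamma}+\frac{3}{2}\sqrt n\Bigr)+\frac{\sqrt p}{2}\norm{D}\Bigr].
\]
Using $c/(1-\gamma)\ge c\ge 1$ (forced by $\norm{\Phi^0}=1$) and $3/2\le 2$, the bracket is at most $\frac{c}{1-\gamma}\bigl(\frac{\sqrt p}{2}(\norm{\Gamma}\norm{\Upsilon}+\norm{D})+2\sqrt n\norm{\Upsilon}\bigr)$, which is exactly the factor whose product with $2^{-\ell}$ is $\le\epsilon$ under \eqref{eq:fractional_condition}; the bound is uniform in $t$, giving \eqref{eq:error}.

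The main obstacle is the bookkeeping rather than any single deep step: one must isolate every error source (the state-update rounding and the $1$-bit error $w$, already bundled into $\delta$ by \lemref{lem:modulus_condition}, plus the output quantization $\sigma$) and arrange them so the perturbed dynamics share the \emph{exact} matrix $\Phi$ and the \emph{exact} initial condition of \eqref{eq:closed-loop} — which is what makes the error recursion for $\eta$ autonomous with a vanishing $O(2^{-\ell})$ forcing term — while keeping the $\ell_\infty$-to-$\ell_2$ conversions and the numerical constants consistent with \eqref{eq:fractional_condition}.
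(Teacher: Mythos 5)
Your proposal is correct and follows essentially the same route as the paper's proof: it forms the perturbed closed loop $\hat\chi(t+1)=\Phi\hat\chi(t)+2^{-\ell}(\Gamma\sigma(t)+\Xi\delta(t))$ (the paper's \eqref{eq:2pc_closed-loop}, which you re-derive from \lemref{lem:modulus_condition} rather than reuse), exploits the zero initial error, bounds the state error by the geometric series $c/(1-\gamma)$ times the $O(2^{-\ell})$ disturbance, and converts to the input error via $\Upsilon$ and the $D\sigma$ term, with the same $3/2\le 2$ and $c/(1-\gamma)\ge 1$ relaxations yielding exactly the constant in \eqref{eq:fractional_condition}. No gaps; the bookkeeping and norm conversions match the paper's.
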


\begin{proof}
    See \appref{app:proof_input_error}.
\end{proof}

Furthermore, if there is no direct feedthrough in the controllers, a more concise condition can be obtained.

\begin{corollary}
\label{cor:fractional_condition}
    Suppose $D = \bfzero_{m \times p}$.
    If
    \begin{equation}
        \ell \ge 2 (k - \ell) + \log_2 \frac{ c \epsilon^{-1} }{ 1 - \gamma } (p + 1) n \sqrt{m},
        \label{eq:fractional_condition_2}
    \end{equation}
    the condition \eqref{eq:fractional_condition} is satisfied.
\end{corollary}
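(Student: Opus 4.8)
The plan is to derive \eqref{eq:fractional_condition} from \eqref{eq:fractional_condition_2}, after which \corref{cor:fractional_condition} is immediate from \thmref{thm:fractional_condition}. The first move is to specialize the matrices in \eqref{eq:fractional_condition} to the feedthrough-free case. When $D = \bfzero_{m \times p}$ we have $B_p D = \bfzero$, so $\Gamma$ is just $B$ with a zero block stacked on top and $\norm{\Gamma} = \norm{B}$; similarly $D C_p = \bfzero$ gives $\norm{\Upsilon} = \norm{C}$, since $\Upsilon$ is $C$ with a zero block appended; and the $\norm{D}$ term drops out. Hence it suffices to prove
\[
    \ell \ge \log_2 \frac{c \epsilon^{-1}}{1 - \gamma} \qty( \frac{\sqrt p}{2} \norm{B} \norm{C} + 2 \sqrt n \norm{C} ).
\]

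Next I would bound $\norm{B}$ and $\norm{C}$ using \asmref{asm:fixed-point}: since $B \in \QQ{k}{\ell}^{n \times p}$ and $C \in \QQ{k}{\ell}^{m \times n}$, each of their entries has magnitude at most $2^{k - 1 - \ell}$, so $\norm{B} \le \norm{B}_F \le \sqrt{np}\, 2^{k - 1 - \ell}$ and $\norm{C} \le \norm{C}_F \le \sqrt{mn}\, 2^{k - 1 - \ell}$. Inserting these into the displayed inequality, extracting one factor of $2^{k - \ell}$, and simplifying the radicals via $\sqrt p \sqrt{np} \sqrt{mn} = np\sqrt m$ and $\sqrt n \sqrt{mn} = n\sqrt m$, the right-hand side of the display is bounded above by $k - \ell + \log_2 \frac{c \epsilon^{-1}}{1 - \gamma}\qty(\tfrac18 np\sqrt m\, 2^{k - \ell} + n\sqrt m)$. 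Comparing with \eqref{eq:fractional_condition_2}, it then remains to verify the purely dimensional inequality $\tfrac18 np\sqrt m\, 2^{k - \ell} + n\sqrt m \le mp\sqrt n + n\sqrt m$; exponentiating \eqref{eq:fractional_condition_2}, chaining the inequalities, and taking logarithms again finishes the argument.

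I expect this last comparison to be the crux. The linear term $2\sqrt n \norm{C}$ is matched cleanly by $n\sqrt m\, 2^{k - \ell}$, but the quadratic cross term $\frac{\sqrt p}{2}\norm{B}\norm{C}$ inherits a factor $2^{2(k - \ell)}$ from the two entrywise bounds, whereas \eqref{eq:fractional_condition_2} allocates only a single power of $2^{k - \ell}$. Absorbing the cross term into $mp\sqrt n$ therefore seems to need more than the crude Frobenius estimates above — either a sharper bound on $\norm{B}\norm{C}$, or careful use of the entrywise bound $2^{k-1-\ell}$ together with whatever restriction \eqref{eq:fractional_condition_2} itself imposes on $\ell$ relative to $k$ and the system dimensions — and this is precisely the place where the exponent and constant bookkeeping must be done with care. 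The remaining manipulations with logarithms and exponentials are routine.
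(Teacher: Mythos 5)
Your route is the same as the paper's: the paper's entire proof of this corollary is the one-line substitution of the bounds $\norm{\Gamma}=\norm{B}\le 2^{k-\ell-1}\sqrt{np}$, $\norm{\Upsilon}=\norm{C}\le 2^{k-\ell-1}\sqrt{mn}$ (and $\norm{D}\le 2^{k-\ell-1}\sqrt{mp}$, listed even though $D=\bfzero$) into \eqref{eq:fractional_condition} — exactly the entrywise/Frobenius estimates you use. Your bookkeeping of that substitution is correct: with $D=\bfzero$ the right-hand side of \eqref{eq:fractional_condition} is at most $\log_2\frac{c\epsilon^{-1}}{1-\gamma}\left(2^{2(k-\ell)-3}\,np\sqrt{m}+2^{k-\ell}\,n\sqrt{m}\right)$, so the cross term genuinely carries $2^{2(k-\ell)}$ while \eqref{eq:fractional_condition_2} budgets only a single factor $2^{k-\ell}$, and moreover the dimensional factor produced is $np\sqrt{m}$ (since $\sqrt{p}\cdot\sqrt{np}\cdot\sqrt{mn}=np\sqrt{m}$), not the $mp\sqrt{n}$ appearing in \eqref{eq:fractional_condition_2}.

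The crux you flagged is therefore not something you failed to see in the paper — the paper does not close it either. Absorbing $2^{2(k-\ell)-3}np\sqrt{m}$ into $2^{k-\ell}mp\sqrt{n}$ requires $2^{k-\ell-3}\le\sqrt{m/n}$, which fails for typical parameters (the paper's own examples use $k-\ell=8$ with $m\le n$), and \eqref{eq:fractional_condition_2} imposes no relation between $k-\ell$ and $m,n$ that could rescue it; there is no sharper norm estimate elsewhere in the paper. Carried out carefully, the paper's own argument yields a condition of the form $\ell\ge 2(k-\ell)+\log_2\frac{c\epsilon^{-1}}{1-\gamma}\left(\tfrac18\,np\sqrt{m}+n\sqrt{m}\right)$, i.e.\ the printed \eqref{eq:fractional_condition_2} appears to drop one power of $2^{k-\ell}$ from the quadratic term and to swap the radicand ($mp\sqrt{n}$ versus $np\sqrt{m}$). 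So your proposal is incomplete as a proof of the corollary as stated, but the obstruction you identified is genuine: the statement follows from this approach only with $2(k-\ell)$ in place of $k-\ell$ (and $np\sqrt{m}$ in place of $mp\sqrt{n}$), or under an extra assumption such as $2^{k-\ell}\le 8\sqrt{m/n}$.
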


\begin{proof}
    See \appref{app:proof_fractional_condition}.
\end{proof}

Here, $k - \ell$ in \eqref{eq:fractional_condition_2} represents the bit length of the integer part of the controller parameters and initial state.
This result suggests the following procedure for selecting the parameters used in the proposed controller computation protocol:
1) Given the design parameter $\epsilon > 0$ and the security parameter $\lambda \in \N$.
2) Fix $k - \ell \in \N$ to be sufficiently large.
3) Choose the bit length $\ell \in \N$.
4) Design the controller \eqref{eq:controller} to satisfy \asmref{asm:fixed-point} and \asmref{asm:stability}.
5) Check the condition \eqref{eq:fractional_condition} (or \eqref{eq:fractional_condition_2} when $D = \bfzero_{m \times p}$).
If $\ell$ does not satisfy the condition, increase $\ell$ and return to 4).
6) Choose the modulus $q$ to satisfy \eqref{eq:modulus_condition}.

Note that the term $k + \lambda + 2$ in \eqref{eq:modulus_condition} is essential to guarantee the correctness and security of the $\Trunc$ protocol.
The last term of \eqref{eq:modulus_condition} is induced by encoding and mainly depends on $2^\ell$ in $\beta$.
This implies a trade-off between communication costs and precision.
This trade-off can be improved by optimizing the lower bound of $\ell$ in \eqref{eq:fractional_condition}.

\begin{remark}
    During the execution of \proref{pro:controller}, each party $P_i$ receives the $i$th shares in \eqref{eq:parameter_share}, \eqref{eq:output_share}, and \eqref{eq:auxiliary_inputs} from the client and can access public parameters $k$, $\ell$, $q$, and $\lambda$.
    It also obtains the masked messages $(d, e)$ and $m_r$ (if $i = 1$) for each invocation of $\Mult$ and $\Trunc$, respectively.
    Here, the subprotocols are sequentially invoked in the online phase.
    In this case, the modular composition theorem in~\cite{Canetti2000-cl} guarantees that the security of the entire protocol is equivalent to the weakest security of the subprotocols.
    Therefore, the proposed protocol achieves statistical security with respect to $\lambda$.
\end{remark}

\subsection{Comparison to Previous Works}

We compare the theoretical costs in the online phase of the proposed protocol with conventional single-server schemes based on learning with errors (LWE) encryption~\cite{Kim2023-pk} and ring learning with errors (RLWE) encryption~\cite{Teranishi2024-ha,Lee2025-jo}.
\tabref{tab:complexity} lists their computational complexities on the client and server sides, as well as communication complexities from the client (resp. server) to the server (resp. client) and between the servers, which are detailed below.
For a fair comparison, the moduli of the ciphertext spaces and the dimensions of the secret keys in the encryption schemes were set to $q$ and $N$, respectively.

\begin{table*}[t]
    \centering
    \caption{Comparison of computational and communication complexities}
    \label{tab:complexity}
    \begin{tabular}{*{6}c}
        \toprule
        & \multicolumn{2}{c}{Computational complexity} & \multicolumn{3}{c}{Communication complexity} \\
        Scheme & Client & Server & Client $\rightarrow$ Server & Server $\rightarrow$ Client & Server $\leftrightarrow$ Server \\
        \midrule
        \cite{Kim2023-pk} & $O( (m + p) N )$ & $O( (n + m) (n + p) \zeta N^2 )$ & $O( (m + p) N \log q )$ & $O( m N \log q )$ & -- \\
        \cite{Teranishi2024-ha} & $O( N \log N )$ & $O( (n + m + p) \zeta N \log N )$ & $O( N \log q )$ & $O( N \log q )$ & -- \\
        \cite{Lee2025-jo} & $O( m^2 + N \log N )$ & $O( n N \log N )$ & $O( N \log q )$ & $O( N \log q )$ & -- \\
        \proref{pro:controller} & $O( (n + m) (n + p) )$ & $O( (n + m) (n + p) )$ & $O( (n + m) (n + p) \log q )$ & $O( m \log q )$ & $O( (n + m) (n + p) \log q )$ \\
        \bottomrule
    \end{tabular}
    \vspace{-3mm}
\end{table*}

\subsubsection{LWE-based encrypted controller in~\cite{Kim2023-pk}}

The authors of~\cite{Kim2023-pk} represented the controller \eqref{eq:controller} in the form
\begin{align*}
    x'(t + 1) &= T (A - R C) T^{-1} x'(t) + T (B - R D) y(t) + \mu(t), \\
    u(t) &= C T^{-1} x'(t) + D y(t),
\end{align*}
with $R \in \R^{n \times m}$, $T \in \R^{n \times n}$, $x'(t) = T x(t) \in \R^n$, and $\mu(t) = T R u(t) \in \R^m$ such that $T (A - R C) T^{-1} \in \Z^{n \times n}$.
They encrypted each element of its parameters and signals using the Gentry-Sahai-Waters (GSW)~\cite{Gentry2013-eq} and LWE~\cite{Regev2009-ys} encryption schemes, respectively.
The ciphertexts of the GSW and LWE encryptions are a matrix in $\Z_q^{(N + 1) \times \zeta (N + 1)}$ and a vector in $\Z_q^{N + 1}$, where $\zeta = \floor{\log_\nu q}$ for some $\nu \ll q$.

\emph{Communication complexity:}
For each time $t \in \N_0$, the client sends and receives the LWE ciphertexts of $y(t)$ and $u(t)$, respectively, and then returns the re-encryption of $u(t)$ for computing $\mu(t)$.
Thus, the communication complexities from the client to the server and from the server to the client are $O( (m + p) N \log q )$ and $O( m N \log q )$, respectively.

\emph{Computational complexity:}
The computational complexity of the client is $O( (m + p) N )$ because LWE encryption and decryption require an inner product of $N$-dimensional vectors.

On the server, the encrypted controller requires $n (n + 2m + p) + mp$ multiplications between the GSW and LWE ciphertexts and $n (n + 2m + p - 4) + m (p - 1)$ additions of the LWE ciphertexts.
According to~\cite{Lee2025-jo}, each GSW-LWE multiplication comprises $\zeta (N + 1)^2$ scalar multiplications.
In addition, each LWE addition is an $(N + 1)$-dimensional vector addition.
Consequently, the computational complexity of the server is $O( (n + m) (n + p) \zeta N^2 )$.

\subsubsection{RLWE-based encrypted controller in~\cite{Teranishi2024-ha}}

The authors of~\cite{Teranishi2024-ha} used the controller representation $u(t) = H d(t)$ with $d(t) = [ y(t - n)^\top \ \cdots \ y(t)^\top \ u(t - n)^\top \ \cdots \ u(t - 1)^\top ]^\top$ and encrypted $H \in \R^{m \times (n (m + p) + p)}$ and $d(t) \in \R^{n (m + p) + p}$ using the RLWE encryption~\cite{Fan2012-jt}.
The ciphertext of the RLWE encryption is a pair of polynomials in $R_q^2$, where $R_q \coloneqq \Z_q[X] / (X^N + 1)$ is a polynomial ring consisting of $N$-dimensional polynomials in $X$ over $\Z_q$.

\emph{Communication complexity:}
Similar to the LWE-based scheme, for each time $t \in \N_0$, the client sends the encryption of $y(t)$ and re-encryption of $u(t)$ and receives the encryption of $u(t)$.
Meanwhile, the RLWE-based scheme applied message packing~\cite{Smart2012-pc} when encrypting $y(t)$ and $u(t)$, which enables the encryption of a vector as a single ciphertext in $R_q^2$.
Therefore, the communication complexity from the client (resp. server) to the server (resp. client) is $O(N \log q)$.

\emph{Computational complexity:}
The message packing and unpacking are respectively performed before encryption and after decryption.
The packing and unpacking are implemented by the (inverse) number theoretic transform~\cite{Teranishi2024-ha}, of which complexity is $O(N \log N)$~\cite{Satriawan2023-gz}.
The encryption and decryption include a multiplication of polynomials in $R_q$.
The $N$-dimensional polynomial multiplication can also be computed using the number theoretic transform.
Therefore, the computational complexity of the client is $O(N \log N)$.

The encrypted controller requires $n + 1$ multiplications with linearization, $n + m + p + 2$ additions, and $m + p - 1$ rotations of the RLWE ciphertexts.
Note that the multiplication in the RLWE encryption increases the ciphertext size from $R_q^2$ to $R_q^3$.
The linearization reduces it to $R_q^2$, and the rotation replaces the position of vector elements packed in a ciphertext.
Each multiplication and addition process includes four multiplications and two additions in $N$-dimensional polynomials, respectively.
The linearization and rotation carry out $2 \zeta$ $N$-dimensional polynomial multiplications~\cite{Lee2025-jo}, where $\zeta$ is as in the LWE-based scheme.
Therefore, the computational complexity of the server is $O( (n + m + p) \zeta N \log N )$.

\subsubsection{RLWE-based encrypted controller in~\cite{Lee2025-jo}}

When $D = \bfzero_{m \times p}$, the authors of~\cite{Lee2025-jo} employed a variant of the controller representation in~\cite{Teranishi2024-ha}, $u(t) = H' d'(t)$, $d'(t) = [ y(t - 1)^\top \ \cdots \ y(t - n)^\top \ u(t - 1)^\top \ \cdots \ u(t - n)^\top ]^\top$, and encrypted $H' \in \R^{m \times n (m + p)}$ and $d'(t) \in \R^{n (m + p)}$ using the RLWE encryption~\cite{Brakerski2014-ny}.

\emph{Communication complexity:}
The communication complexities are the same as those in~\cite{Teranishi2024-ha}.

\emph{Computational complexity:}
This RLWE-based scheme considers a different encoding of data and replaces the addition, linearization, and rotation on the server in~\cite{Teranishi2024-ha} with aggregation on the client, which consists of $m^2$ scalar additions.
This changes the computational complexities of the client and server to $O(m^2 + N \log N)$ and $O(n N \log N)$, respectively.

\subsubsection{Proposed protocol (\proref{pro:controller})}

First, we note that the computational complexities of the $\Share$ and $\Reconst$ algorithms are $O(1)$ for a scalar message.
Thus, the complexities of (constant) addition/multiplication and truncation of shares are $O(1)$.

\emph{Communication complexity:}
For each time $t \in \N_0$, the client transmits $\share{\tilde{y}(t)}_i$ in \eqref{eq:output_share} and $\aux_i$ in \eqref{eq:auxiliary_inputs} to each party $P_i$, $i \in \{1, 2\}$ and receives $\share{\tilde{u}(t)}_i$ in \eqref{eq:2pc_controller} from the party.
Hence, the communication complexities from the client to the servers and from the servers to the client are $O( (n + m) (n + p) \log q )$ and $O( m \log q )$, respectively.
Furthermore, to compute the controller \eqref{eq:2pc_controller}, the parties invoke the $\Mult$ and $\Trunc$ protocols $(n + m) (n + p)$ and $n$ times, respectively.
Each invocation of $\Mult$ and $\Trunc$ requires the transmission of four and one integers in $\Z_q$ to exchange $(\share{d}, \share{e})$ and $\share{m_r}_2$. 
Therefore, the communication complexity between the servers is $O( (n + m) (n + p) \log q )$.

\emph{Computational complexity:}
The client generates $\share{\tilde{y}(t)}_i$ and $\aux_i$ and reconstructs $\share{\tilde{u}(t)}_i$ in the online phase.
This results in $(n + m) (n + p) + 2n + p$ and $m$ invocations for the $\Share$ and $\Reconst$ algorithms, respectively.
In addition to the $\Mult$ and $\Trunc$ protocols, the computation of \eqref{eq:2pc_controller} requires $n + m$ additions of shares.
Consequently, the computational complexities of the client and servers are $O( (n + m) (n + p) )$.

Compared to the previous schemes, the proposed protocol significantly improves the computational efforts of both the client and servers.
Note that the dimension parameter $N$ is typically chosen to be larger than $2^{10}$~\cite{HES2018}.
On the other hand, although the proposed protocol reduces the communication complexity from the servers to the client, the reduction in the complexity from the client to the servers is limited to satisfying $(n + m) (n + p) \ll N$.
This implies that the proposed protocol incurs high communication costs for large-scale controllers.

\begin{remark}
    One might notice that the computational complexity of the client in \proref{pro:controller} is the same as that of the unencrypted case.
    This is because the client needs to generate the auxiliary inputs $\aux_i$, of which size is $O( (n + m) (n + p) )$.
    However, the auxiliary inputs are independent of the protocol inputs (e.g., the plant output $\hat{y}(t)$).
    This implies that their generation can be addressed in the offline phase if the control system terminates in a finite time.
    Even if this is not the case, it can be outsourced to a third party called a trusted dealer, who is non-colluding with the parties.
    Then, the computational complexity of the client can be reduced to $O(m + p)$.
\end{remark}

\begin{remark}
    \proref{pro:controller} can be extended to $M$ parties using an $M$-out-of-$M$ secret sharing scheme.
    Then, the complexities in \tabref{tab:complexity} increase linearly with $M$.
    Meanwhile, \asmref{asm:adversary} can be relaxed to allow the collusion of any $M - 1$ parties, which mitigates security risks.
    Recall that the proposed scheme is compromised when all parties are corrupted, unlike the single-server setting using homomorphic encryption.
\end{remark}

\section{Improvement of Communication Complexities}
\label{sec:improvement_of_communication_complexities}

As reviewed, the client and parties $P_i$, $i \in \{1, 2\}$ in the proposed protocol exchange $\share{\tilde{y}(t)}_i$ in \eqref{eq:output_share}, $\aux_i$ in \eqref{eq:auxiliary_inputs}, and $\share{\tilde{u}(t)}_i$ in \eqref{eq:2pc_controller} for each time $t \in \N_0$.
Thus, the total communication cost between the client and parties is $2 ( 3 (n + m) (n + p) + 2 n + m + p ) \ell_q$~bit, where $\ell_q = \floor{ \log_2 q } + 1$ is the bit length of the modulus $q$.
This section discusses modifications to the proposed protocol, which improves the cost by applying coordinate transformation and cryptographic primitives.

\subsection{Brunovsky normal form}

It is expected that sparse controller parameters will reduce the number of multiplications of shares by ignoring their zero elements.
To transform the parameters of \eqref{eq:controller} to sparse parameters, we employ the Brunovsky normal form~\cite{Brunovsk-`y1970-kk}.

\begin{definition}[Brunovsky normal form~\cite{Tabuada2003-fp}]
\label{def:bnf}
Consider $F \in \R^{n \times n}$ and $G \in \R^{n \times m}$. For $r \le m$, let $(n_1, \dots, n_r)$ be a sequence of positive integers, such that $n_1 \ge \cdots \ge n_r$ and $n_1 + \cdots + n_r = n$. The pair $(F, G)$ is in the Brunovsky normal form if $F$ and $G$ are of the form
\[
    F =
    \begin{bmatrix}
        F_{n_1} &        & \\
                & \ddots & \\
                &        & F_{n_r}
    \end{bmatrix}, \quad
    G =
    \begin{bmatrix}
        G_{n_1} &        &         & 0      & \cdots & 0 \\
                & \ddots &         & \vdots & \ddots & \vdots \\
                &        & G_{n_r} & 0      & \cdots & 0
    \end{bmatrix},
\]
where $F_{n_i} \in \R^{n_i \times n_i}$ and $G_{n_i} \in \R^{n_i \times 1}$ are of the form
\[
    F_{n_i} =
    \begin{bmatrix}
        0      & 1      & \cdots & 0 \\
        \vdots & \vdots & \ddots & \vdots \\
        0      & 0      & \cdots & 1 \\
        f_{n_i, 0} & f_{n_i, 1} & \cdots & f_{n_i, n_i - 1}
    \end{bmatrix}, \quad
    G_{n_i} =
    \begin{bmatrix}
        0 \\
        \vdots \\
        0 \\
        1
    \end{bmatrix},
\]
for every $i = 1, \dots, r$, and $\sum_{j = 0}^{n_i} f_{n_i, j} X^j$ with $f_{n_i, n_i} = 1$ is the characteristic polynomial of $F_{n_i}$.
\end{definition}

We now consider transforming the dual system of \eqref{eq:controller} into the Brunovsky normal form.
Without loss of generality, suppose that the pair $(A, C)$ of \eqref{eq:controller} is observable.
Moreover, assume that $C$ is full row rank (i.e., $\rank C = m$) for simplicity.
Since $(A^\top, C^\top)$ is controllable, there exists a unique sequence $(n_1, \dots, n_m)$ and nonsingular matrix $P \in \R^{n \times n}$ such that the pair $(P A^\top P^{-1}, P C^\top)$ is in the Brunovsky normal form~\cite{Brunovsk-`y1970-kk,Kalman1972-cq}.
Then, with the coordinate transformation $(P^\top)^{-1} x(t) = [ x_{n_1}(t)^\top \ \cdots \ x_{n_m}(t)^\top ]^\top$, the controller can be decomposed into $m$ multi-input single-output subsystems,
\begin{align*}
    \begin{bmatrix}
        x_{n_1}(t + 1) \\
        \vdots \\
        x_{n_m}(t + 1)
    \end{bmatrix}
    &\!=\! \underbrace{
    \begin{bmatrix}
        A_{n_1} &        & \\
                & \ddots & \\
                &        & A_{n_m}
    \end{bmatrix}
    }_{ (P^\top)^{-1} A P^\top }
    \begin{bmatrix}
        x_{n_1}(t) \\
        \vdots \\
        x_{n_m}(t)
    \end{bmatrix}
    +
    \underbrace{
    \begin{bmatrix}
        B_{n_1} \\
        \vdots \\
        B_{n_m}
    \end{bmatrix}
    }_{ (P^\top)^{-1} B }
    y(t), \\
    \begin{bmatrix}
        u_1(t) \\
        \vdots \\
        u_m(t)
    \end{bmatrix}
    &\!=\! \underbrace{
    \begin{bmatrix}
        C_{n_1} &        & \\
                & \ddots & \\
                &        & C_{n_m}
    \end{bmatrix}
    }_{ C P^\top }
    \begin{bmatrix}
        x_{n_1}(t) \\
        \vdots \\
        x_{n_m}(t)
    \end{bmatrix}
    +
    \underbrace{
    \begin{bmatrix}
        D_1 \\
        \vdots \\
        D_m
    \end{bmatrix}
    }_{ D }
    y(t),
\end{align*}
where $A_{n_i} \in \R^{n_i \times n_i}$ and $C_{n_i} \in \R^{1 \times n_i}$ have the same form as $F_{n_i}^\top$ and $G_{n_i}^\top$ in \defref{def:bnf}, respectively, $B_{n_i} \in \R^{n_i \times p}$, and $D_i \in \R^{1 \times p}$.

We replace $(A, B, C, D)$ of the controller \eqref{eq:controller} with $( (P^\top)^{-1} A P^\top, (P^\top)^{-1} B, C P^\top, D )$ and assume that the parameters satisfy \asmref{asm:fixed-point}.
Using the decomposition, the proposed controller \eqref{eq:2pc_controller} can be implemented as, for every $i = 1, \dots, m$,
\[
    \begin{aligned}
        \share[\big]{\tilde{x}_{n_i}(t + 1)} &\gets \Trunc \qty( \share[\big]{\tilde{A}_{n_i}} \share[\big]{\tilde{x}_{n_i}(t)} + \share[\big]{\tilde{B}_{n_i}} \share[\big]{\tilde{y}(t)}, \ell ), \\
        \share[\big]{\tilde{u}_i(t)} &= \share[\big]{\tilde{x}_{n_i, n_i}(t)} + \share[\big]{\tilde{D}_i} \share[\big]{\tilde{y}(t)},
    \end{aligned}
\]
where 
\[
    \share[\big]{\tilde{A}_{n_i}} \!=\!
    \begin{bmatrix}
        0 & \cdots & 0 & \share[\big]{ \tilde{a}_{n_i, 0} } \\[1pt]
        1 & \cdots & 0 & \share[\big]{ \tilde{a}_{n_i, 1} } \\
        \vdots & \ddots & \vdots & \vdots \\
        0 & \cdots & 1 & \share[\big]{ \tilde{a}_{n_i, n_i - 1} }
    \end{bmatrix}, \ 
    \share[\big]{\tilde{x}_{n_i}(t)} \!=\!
    \begin{bmatrix}
        \share[\big]{\tilde{x}_{n_i, 1}(t)} \\[1pt]
        \share[\big]{\tilde{x}_{n_i, 2}(t)} \\
        \vdots \\
        \share[\big]{\tilde{x}_{n_i, n_i}(t)}
    \end{bmatrix},
\]
and $( a_{n_i, 0}, \dots, \allowbreak a_{n_i, n_i - 1} )$ are the coefficients of the characteristic polynomial of $A_{n_i}$.
The computation of each subsystem requires $n_i + n_i p + p$ multiplications of shares, and hence, $\sum_{i = 1}^m n_i + n_i p + p = (n + m) p + n$ Beaver triples are consumed to evaluate the $m$ subsystems.
The total communication cost is then given as $2 ( 3 (n + m) p + 5 n + m + p ) \ell_q$~bit, thereby reducing the communication complexity from the client to the servers to $O( (n + m) p \log q )$.
The modification also improves the computational complexities of the client and servers to $O( (n + m) p )$.

\subsection{Pseudorandom function}

By definition, the shares for the party $P_1$ are random numbers in $\Z_q$ and independent of messages.
This characteristic allows us to utilize a pseudorandom function further to reduce the communication cost between the client and parties.
A pseudorandom function is a polynomial-time deterministic algorithm $\PRF : \{0, 1\}^{\ell_\mathrm{key}(\lambda)} \times \{0, 1\}^{\ell_\mathrm{in}(\lambda)} \to \{0, 1\}^{\ell_\mathrm{out}(\lambda)}$, which takes an $\ell_\mathrm{key}(\lambda)$-bit key and an $\ell_\mathrm{in}(\lambda)$-bit string as input and outputs an $\ell_\mathrm{out}(\lambda)$-bit string, where $\lambda$ is a security parameter~\cite{Katz2014-kb}.
It is necessary that any polynomial-time algorithm without knowledge of the key cannot distinguish the output from a uniformly random string over $\{0, 1\}^{\ell_\mathrm{out}(\lambda)}$.

Let $\PRF_K$ be a pseudorandom function with a key $K \in \{0, 1\}^{\ell_\mathrm{key}(\lambda)}$ whose input and output domains are $\{0, 1\}^{\ell_\mathrm{in}(\lambda)}$ and $\Z_q$, respectively.
To generate shares for $P_1$, in the offline phase, the client randomly chooses a key $K$ and transmits $\PRF_K$ to $P_1$.
In the online phase, the client and $P_1$ can generate the same pseudorandom number $r \gets \PRF_K(t \bmod 2^{\ell_\mathrm{in}(\lambda)})$ regardless of $m \in \Z_q$.
Then, $P_1$ sets $\share{m}_1 = r$, and the client computes the share for $P_2$ as $\share{m}_2 = m - r \bmod q$, which satisfies the requirement \refeq{eq:correctness}.
By applying this technique to each share generation, all communications from the client to $P_1$ can be eliminated.
This reduces the total communication cost between the client and parties to $( 3 (n + m) p + 5 n + 2 m + p ) \ell_q$~bit.
Note that, because $\PRF_K$ is deterministic, the client must refresh the key $K$ every $\tau \in \N$ time step to ensure the security of $\PRF_K$.
Although this process incurs additional communication costs, the modified share generation remains more efficient than the original one.

\subsection{Pseudorandom correlation function}

Instead of applying the pseudorandom functions, one can use a pseudorandom correlation function~\cite{Boyle2020-sh}, which is analogous to a pseudorandom function that generates correlated randomness.
A pseudorandom correlation function $\PCF$ is a two-party protocol in which each party $P_i$, $i \in \{1, 2\}$ is given a key $K_i$ and, for some input string $x$, the parties can locally compute correlated strings $(\PCF_{K_1}(x), \PCF_{K_2}(x))$, where the outputs and random samples from the target correlation are indistinguishable~\cite{Couteau2023-fs}.

In our scenario, the keys can be generated by the client, allowing the parties to obtain Beaver triples $(\share{a}_i, \share{b}_i, \share{c}_i)$ on demand during the online phase by invoking a pseudorandom correlation function $\PCF_{K_i}$ for the Beaver triple correlation with a common reference string $\mathsf{crs}$ (e.g., time $t$), namely $(\share{a}_i, \share{b}_i, \share{c}_i) \gets \PCF_{K_i}(\mathsf{crs})$.
Moreover, the shares of random numbers $(\share{r_h}, \share{r'_h})$ can be generated in the same manner.
This approach eliminates the transmission of the auxiliary inputs $\aux_i$, thereby reducing the total communication cost between the client and parties to $2 (m + p) \ell_q$~bit.
It also decreases the computational complexity of the client to $O(m + p)$, as the client only needs to execute the $\Share$ and $\Reconst$ algorithms $p$ and $m$ times, respectively.

Integrating the proposed protocol with pseudorandom correlation functions is the most promising approach to significantly improve the communication complexity.
In this pursuit, exploring efficient constructions of such pseudorandom correlation functions in real-time control systems is future work.

\section{Numerical Examples}
\label{sec:numerical_examples}

Consider a PID controller of parallel form in~\cite{Kim2023-pk}.
The controller parameters of \eqref{eq:controller} are $A = \qty[ \begin{smallmatrix} 2 - N_d & N_d - 1 \\ 1 & 0 \end{smallmatrix} ]$, $B = [1 \ 0]^\top$, $C = [c_1 \ c_2]$, and $D = d$, where $c_1 = K_i T_s - N_d^2 K_d / T_s$, $c_2 = N_d K_i T_s - K_i T_s + N_d^2 K_d / T_s$, and $d = K_p + N_d K_d / T_s$.
Here, $K_p$, $K_i$, and $K_d$ are proportional, integral, and derivative gains, respectively, $N_d \in \N$ is the parameter of a derivative filter, and $T_s \in \R$ is a sampling time.
Note that, if $2 - N_d, N_d - 1 \in \ZZ{k - \ell}$, the parameters $A$ and $B$ are already integers.
Then, the proposed controller \eqref{eq:2pc_controller} can be simplified as
\begin{align*}
    \share[\big]{\tilde{x}(t + 1)} &= \share[\big]{A} \share[\big]{\tilde{x}(t)} + \share[\big]{B} \share[\big]{\tilde{y}(t)}, \\
    \share[\big]{\tilde{u}(t)} &= \share[\big]{\tilde{C}} \share[\big]{\tilde{x}(t)} + \share[\big]{\tilde{D}} \share[\big]{\tilde{y}(t)},
\end{align*}
and the $\Trunc$ protocol is not necessary to implement \eqref{eq:controller}.

Suppose that the plant \eqref{eq:plant} and sampling time are given by the single-input and single-output system in ~\cite[Eq.~(2), with $\alpha = 0.2$]{Astrom2000-ta} and $T_s = 0.1$~s, respectively.
Let $\epsilon = 2^{-10}$, $\lambda = 80$~bit, $k - \ell = 8$~bit, $x_{p,0} = [100 \ 100 \ 100 \ 100]^\top$, and $x_0 = [0 \ 0]^\top$.
For instance, the PID controller with the parameters $A = \qty[ \begin{smallmatrix} 1 & 0 \\ 1 & 0 \end{smallmatrix} ]$, $B = [1 \ 0]^\top$, $C = [2.7368927 \ {-2.96540833}]$, and $D = -5.01071167$ satisfies \asmref{asm:fixed-point}, \asmref{asm:stability}, and \eqref{eq:fractional_condition} with $\ell = 32, 40, 48, 56$~bit.
The bit length of modulus $q$ can be chosen as $256$~bit to satisfy \eqref{eq:modulus_condition} for all the choices of $\ell$.
\figref{fig:pid_error} shows the input errors $\norm{ u(t) - \hat{u}(t) }$.
In this figure, the input errors remain below $\epsilon = 2^{-10}$ for all $\ell = 32, 40, 58, 64$ and $t = 0 , \dots, 50$, thereby supporting \thmref{thm:fractional_condition}.
Furthermore, the figure suggests that increasing the bit length $\ell$ can reduce the errors.

\begin{figure}[t]
    \centering
    \includegraphics[scale=1,trim=0 6 0 0,clip]{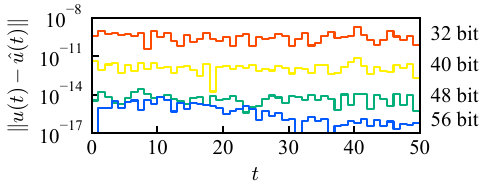}
    \caption{Input errors between the original and two-party PID controls with $\ell = 32, 40, 48, 56$~bit.}
    \label{fig:pid_error}
\end{figure}

Next, we consider an observer-based controller for the plant \eqref{eq:plant} of multi-input and multi-output.
The controller parameters of \eqref{eq:controller} are given as $A = A_p + B_p F - L C_p$, $B = L$, $C = F$, $D = \bfzero_{m \times p}$, where $F$ and $L$ are feedback gains.
Because the controller parameters are not necessarily integers, we used the proposed controller \eqref{eq:2pc_controller} with the $\Trunc$ protocol.

Suppose the plant is given as the four-tank process in~\cite{Johansson2000-il} with $h_1^0=12.4$~cm, $h_2^0=12.7$~cm, $h_3^0=1.8$~cm, $h_4^0=1.4$~cm, $v_1^0=3$~V, $v_2^0=3$~V, $\gamma_1=0.7$, $\gamma_2=0.6$ and $T_s = 0.5$~s.
Let $\epsilon$, $\lambda$, $k - \ell$, and $\ell$ be as in the PID control example.
Let $x_{p,0} = [10 \ 10 \ 10 \ 10]^\top$ and $x_0 = [0 \ 0 \ 0 \ 0]^\top$.
The controller with the parameters,
\begin{align*}
    A &=
    \begin{bmatrix*}[r]
        \scriptstyle  0.56817627 & \scriptstyle -0.00167847 & \scriptstyle  0.01213074 & \scriptstyle -0.00909424 \\[-3pt]
        \scriptstyle -0.00201416 & \scriptstyle  0.57826233 & \scriptstyle -0.00939941 & \scriptstyle  0.00976562 \\[-3pt]
        \scriptstyle -0.15261841 & \scriptstyle -0.01811218 & \scriptstyle  0.97219849 & \scriptstyle -0.00508118 \\[-3pt]
        \scriptstyle -0.01197815 & \scriptstyle -0.15417480 & \scriptstyle -0.00314331 & \scriptstyle  0.98011780
    \end{bmatrix*}, \\
    B &=
    \begin{bmatrix*}[c]
        \scriptstyle 0.78367615 & \scriptstyle 0          & \scriptstyle 0.30230713 & \scriptstyle 0          \\[-3pt]
        \scriptstyle 0          & \scriptstyle 0.78463745 & \scriptstyle 0          & \scriptstyle 0.30718994
    \end{bmatrix*}^\top, \\
    C &=
    \begin{bmatrix*}[r]
        \scriptstyle -0.77249146 & \scriptstyle -0.03674316 & \scriptstyle -0.20259094 & \scriptstyle -0.21775818 \\[-3pt]
        \scriptstyle -0.06149292 & \scriptstyle -0.76373291 & \scriptstyle -0.29937744 & \scriptstyle -0.21432495
    \end{bmatrix*},
\end{align*}
and a $256$-bit prime $q$ satisfy \asmref{asm:fixed-point}, \asmref{asm:stability}, \eqref{eq:modulus_condition}, and \eqref{eq:fractional_condition}.
Similar to the PID control example, \figref{fig:observer_error} shows the input errors for $\ell = 32, 40, 48, 56$~bit.
This figure demonstrates that the performance degradation in the proposed protocol is negligible even when using the truncation protocol.

\begin{figure}[t]
    \centering
    \includegraphics[scale=1,trim=0 6 0 0,clip]{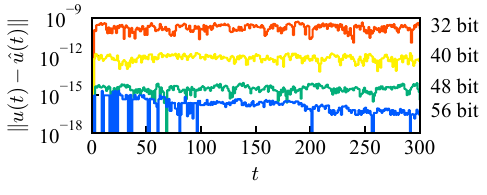}
    \caption{Input errors between the original and two-party observer-based controls with $\ell = 32, 40, 48, 56$~bit.}
    \label{fig:observer_error}
    \vspace{-5mm}
\end{figure}

\section{Conclusion}
\label{sec:conclusion}

We proposed a two-party protocol for the computation of a dynamic controller.
The proposed protocol was realized based on a secret sharing scheme without the need for controller state decryption or input re-encryption and achieved statistical security.
The performance of the dynamic controller under the proposed protocol can be made arbitrarily close to that of the original controller by approximating the controller parameters and initial state as sufficiently precise fixed-point numbers.
We compared the computational and communication complexities of the protocol with those of conventional encrypted controllers and presented some improvements to the protocol using coordinate transformation and cryptographic primitives.
Furthermore, the feasibility of the protocol was demonstrated through numerical examples.

In future work, we will extend the protocol to more advanced control strategies, such as model predictive control.
We will also consider guaranteeing security not only against semi-honest adversaries but also against active adversaries who deviate from our intended behavior.
This would address the limitation imposed by \asmref{asm:adversary}.

\appendix

\subsection{Proof of \lemref{lem:truncation}}
\label{app:proof_truncation}

By the conditions of the bit lengths of $m$, $r$, and $r'$, it follows from \refeq{eq:constant_addition} to \refeq{eq:addition} that $\share{m_r} = \share{ m + 2^\ell r + r' + 2^{\ell - 1} \bmod q } = \share{ m + 2^\ell r + r' + 2^{\ell - 1} }$.
Thus, from \refeq{eq:correctness}, we have $m_r = \Reconst(\share{m_r}) = m + 2^\ell r + r' + 2^{\ell - 1}$.
It then holds that $m_r - 2^{\ell - 1} \bmod 2^\ell = m + r' \bmod 2^\ell = (m \bmod 2^\ell) + (r' \bmod 2^\ell) + 2^\ell w = (m \bmod 2^\ell) + r' + 2^\ell w$, where $r' \in \ZZ{\ell}$ and $w \in \{-1, 0, 1\}$.
The properties from \refeq{eq:constant_addition} to \refeq{eq:addition} yield
\begin{align*}
    \share{m'}
    &= \share{ \inv(2^\ell, q) ( m + r' - ( m_r - 2^{\ell - 1} \bmod 2^\ell ) ) \bmod q }, \\
    &= \share{ \inv(2^\ell, q) ( m - ( m \bmod 2^\ell ) + 2^\ell w ) \bmod q }, \\
    &= \share*{ \round*{ \frac{ m }{ 2^\ell } } + w \bmod q } = \share*{ \round*{ \frac{ m }{ 2^\ell } } + w },
\end{align*}
where $\abs*{ \round{ m / 2^\ell } + w } \le 2^{\kappa - \ell - 1} + 1 < q / 2$.
Therefore, the claim holds from \refeq{eq:correctness}.

\subsection{Proof of \lemref{lem:modulus_condition}}
\label{app:proof_modulus_condition}

We first note that $c$ and $\gamma$ in the claim exist from \asmref{asm:stability}.
In addition, $\alpha > 1$, $\beta > 1$, and $c / ( 1 - \gamma ) > 1$ are fulfilled.
Let $\kappa = \floor{ \log_2 q } - \lambda - 1$, and then, $\ZZ{k} \subset \ZZ{\kappa} \subset \Z_q$ holds because $\log_2 q > \kappa + \lambda + 1$ and $\kappa \ge k + 1 + \floor{ \log_2 \max\{ n, p \} \alpha \beta \frac{ c }{ 1 - \gamma} }$.
This implies that $\tilde{A} = \bar{A}$, $\tilde{B} = \bar{B}$, $\tilde{C} = \bar{C}$, $\tilde{D} = \bar{D}$, and $\tilde{x}_0 = \bar{x}_0$.
From \refeq{eq:correctness}, \refeq{eq:addition}, and \refeq{eq:multiplication}, if $\tilde{A} \tilde{x}(t) + \tilde{B} \tilde{y}(t) \in \ZZ{\kappa}^n$ and $\bar{y}(t) \in \Z_q^p$, \lemref{lem:truncation} guarantees that
\begin{align*}
    \tilde{x}(t + 1)
    &= \Reconst \qty( \Trunc \qty( \share[\big]{\tilde{A}} \share[\big]{\tilde{x}(t)} + \share[\big]{\tilde{B}} \share[\big]{\tilde{y}(t)}, \ell ) ), \\
    &= \round*{ \qty( \bar{A} \tilde{x}(t) + \bar{B} \bar{y}(t) ) / 2^\ell } + w(t).
\end{align*}
Similary, if $\tilde{C} \tilde{x}(t) + \tilde{D} \tilde{y}(t) \in \Z_q^m$ and $\bar{y}(t) \in \Z_q^p$, it follows that
\begin{align*}
    \tilde{u}(t)
    &= \Reconst \qty( \share[\big]{\tilde{C}} \share[\big]{\tilde{x}(t)} + \share[\big]{\tilde{C}} \share[\big]{\tilde{y}(t)} ), \\
    &= \bar{C} \tilde{x}(t) + \bar{D} \bar{y}(t).
\end{align*}
Hence, the claim holds if $\tilde{A} \tilde{x}(t) + \tilde{B} \tilde{y}(t) \in \ZZ{\kappa}^n$, $\tilde{C} \tilde{x}(t) + \tilde{D} \tilde{y}(t) \in \Z_q^m$, and $\bar{y}(t) \in \Z_q^p$ for all $t \in \N_0$.
The remainder of this proof shows this through induction.

Let $\hat{x}(t) = 2^{-\ell} \tilde{x}(t)$ and $\hat{\chi}(t) = [ \hat{x}_p(t)^\top \ \hat{x}(t)^\top ]^\top$.
For $t = 0$, it holds that
\[
    \norm\big{ \bar{y}(0) }_\infty
    \le \norm\big{ 2^\ell \hat{y}(0) }_\infty + \frac{ 1 }{ 2 }
    \le 2^\ell \norm\big{ C_p }_\infty \norm\big{ \hat{\chi}(0) }_\infty + \frac{ 1 }{ 2 }
    < \alpha \beta,
\]
and
\begin{align*}
    &\max\qty{ \norm\big{ \tilde{A} \tilde{x}(0) + \tilde{B} \tilde{y}(0) }_\infty, \norm\big{ \tilde{C} \tilde{x}(0) + \tilde{D} \tilde{y}(0) }_\infty } \\
    &\le 2^{k - 1} \max\{ n, p \} \qty( \norm*{ \bar{x}_0 }_\infty + \norm*{ \bar{y}(0) }_\infty ), \\
    &\le 2^{k - 1} \max\{ n, p \} \qty( 2^\ell \qty( 1 + \norm*{ C_p }_\infty ) \norm*{ \hat{\chi}(0) }_\infty + \frac{ 1 }{ 2 }), \\
    &< 2^{k - 1} \max\{ n, p \} \alpha \beta,
\end{align*}
which means $\tilde{A} \tilde{x}(0) + \tilde{B} \tilde{y}(0) \in \ZZ{\kappa}^n$, $\tilde{C} \tilde{x}(0) + \tilde{D} \tilde{y}(0) \in \ZZ{\kappa}^m \subset \Z_q^m$, and $\tilde{y}(0) = \bar{y}(0) \in \ZZ{\kappa}^p \subset \Z_q^p$ because $\log_2 \alpha \beta < k - 1 + \log_2 \max\{ n, p \} \alpha \beta < \kappa$.

Assume that $\tilde{A} \tilde{x}(t) + \tilde{B} \tilde{y}(t) \in \ZZ{\kappa}^n$, $\tilde{C} \tilde{x}(t) + \tilde{D} \tilde{y}(t) \in \Z_q^m$, and $\tilde{y}(t) = \bar{y}(t) \in \Z_q^p$ hold for all $t = 0, \dots, \tau - 1$ and $\tau \in \N$.
With \refeq{eq:correctness}, \refeq{eq:addition}, and \refeq{eq:multiplication}, it follows from \eqref{eq:plant_2} to \eqref{eq:approximate_input} that, for all $t = 0, \dots, \tau - 1$,
\begin{align*}
    \hat{x}_p(t + 1)
    &= A_p \hat{x}_p(t) + 2^{-2 \ell} B_p \tilde{u}(t), \\
    &= ( A_p + B_p D C_p ) \hat{x}_p(t) + B_p C \hat{x}(t) + B_p D d_1(t), \\
    \hat{x}(t + 1)
    &= 2^{-\ell} \tilde{x}(t + 1), \\
    &= B C_p \hat{x}_p(t) + A \hat{x}(t) + B d_1(t) + d_2(t),
\end{align*}
where $d_1(t) = 2^{-\ell} \bar{y}(t) - \hat{y}(t)$, $d_2(t) = 2^{-\ell} ( e(t) + w(t) )$, and $e(t) = \round{ A \tilde{x}(t) + B \bar{y}(t) } - ( A \tilde{x}(t) + B \bar{y}(t) )$.
Hence, we have
\begin{equation}
    \hat{\chi}(t + 1) = \Phi \hat{\chi}(t) + \Gamma d_1(t) + \Xi d_2(t)
    \label{eq:2pc_closed-loop}
\end{equation}
for all $t = 0, \dots, \tau - 1$, where $\Xi = [ \bfzero_{n_p \times n}^\top \ I_n ]^\top$.
This implies $\hat{\chi}(\tau) = \Phi^\tau \hat{\chi}(0) + \sum_{s = 0}^{\tau - 1} \Phi^{\tau - 1- s} \qty( \Gamma d_1(s) + \Xi d_2(s) )$.

Using this equation, the norm of $\tilde{x}(\tau)$ is bounded by
\begin{align*}
    & \norm*{ \tilde{x}(\tau) }_\infty = 2^\ell \norm*{ \hat{x}(\tau) }_\infty \le 2^\ell \norm*{ \hat{\chi}(\tau) }_\infty, \\
    &\le 2^\ell \qty( \norm*{ \hat{\chi}(0) }_\infty + \frac{ 2^{-\ell} }{ 2 } \norm*{ \Gamma }_\infty + \frac{ 2^{-\ell} \cdot 3 }{ 2 } \norm*{ \Xi }_\infty ) \sum_{s = 0}^\tau \norm*{ \Phi^s }_\infty, \\
    &< \qty( 2^\ell \norm*{ \hat{\chi}(0) }_\infty + \frac{ 1 }{ 2 } \norm*{ \Gamma }_\infty + \frac{ 3 }{ 2 } ) \frac{ c }{ 1 - \gamma } = \beta \frac{ c }{ 1 - \gamma },
\end{align*}
where $\norm{ d_1(s) }_\infty \le 2^{-\ell} \cdot ( 1 / 2 )$, $\norm{ d_2(s) }_\infty \le 2^{-\ell} \cdot ( 3 / 2 )$, and $\sum_{s = 0}^k \norm{ \Phi^s }_\infty \le \sum_{s = 0}^k \norm{ \Phi^s } \le \sum_{s = 0}^k c \gamma^s < \sum_{s = 0}^\infty c \gamma^s = c / (1 - \gamma)$.
Moreover, the norm of $\bar{y}(\tau)$ is bounded by
\begin{align*}
    \norm*{ \bar{y}(\tau) }_\infty
    &\le 2^\ell \norm*{ C_p }_\infty \norm*{ \hat{\chi}(\tau) }_\infty + \frac{ 1 }{ 2 } \le \norm*{ C_p }_\infty \beta \frac{ c }{ 1 - \gamma } + \frac{ 1 }{ 2 }, \\
    &< \qty( \norm*{ C_p }_\infty + \frac{ 1 }{ 2 } ) \beta \frac{ c }{ 1 - \gamma } < \alpha \beta \frac{ c }{ 1 - \gamma },
\end{align*}
which means $\tilde{y}(\tau) = \bar{y}(\tau) \in \ZZ{\kappa}^p \subset \Z_q^p$ because
\begin{align*}
    \alpha \beta \frac{ c }{ 1 - \gamma }
    &<2^{k - 1} \max\{ n, p \} \alpha \beta \frac{ c }{ 1 - \gamma }, \\
    &< 2^{k - 1} \qty( \floor*{ \max\{ n, p \} \alpha \beta \frac{ c }{ 1 - \gamma } } + 1 ), \\
    &\le 2^{k - 1 + \floor{ \max\{ n, p \} \alpha \beta c / ( 1 - \gamma ) }} = 2^{\kappa - 2} < 2^{\kappa - 1} - 1.
\end{align*}
From these bounds, it holds that
\begin{align*}
    &\max\qty{ \norm\big{ \tilde{A} \tilde{x}(\tau) + \tilde{B} \tilde{y}(\tau) }_\infty, \norm\big{ \tilde{C} \tilde{x}(\tau) + \tilde{D} \tilde{y}(\tau) }_\infty } \\
    &\le 2^{k - 1} \max\{ n, p \} \qty( \norm{ \tilde{x}(\tau) }_\infty + \norm{ \bar{y}(\tau) }_\infty ), \\
    &< 2^{k - 1} \max\{ n, p \} \qty( 1 + \norm*{ C_p }_\infty + \frac{ 1 }{ 2 } ) \beta \frac{ c }{ 1 - \gamma }, \\
    &= 2^{k - 1} \max\{ n, p \} \alpha \beta \frac{ c }{ 1 - \gamma } < 2^{\kappa - 1} - 1.
\end{align*}
Therefore, $\tilde{A} \tilde{x}(\tau) + \tilde{B} \tilde{y}(\tau) \in \ZZ{\kappa}^n$ and $\tilde{C} \tilde{x}(\tau) + \tilde{D} \tilde{y}(\tau) \in \ZZ{\kappa}^m \subset \Z_q^m$ are satisfied.

\subsection{Proof of \thmref{thm:fractional_condition}}
\label{app:proof_input_error}

Let $\hat{x}(t) = 2^\ell \tilde{x}(t)$, $\chi(t) = [ x_p(t)^\top \ x(t)^\top ]^\top$, and $\hat{\chi}(t) = [ \hat{x}_p(t)^\top \ \hat{x}(t)^\top ]^\top$.
Define $\varepsilon(t) = \chi(t) - \hat{\chi}(t)$.
If \eqref{eq:modulus_condition} is satisfied, from \eqref{eq:closed-loop} and \eqref{eq:2pc_closed-loop}, the dynamics of $\varepsilon$ is given as
\[
    \varepsilon(t + 1) = \Phi \varepsilon(t) - \Gamma d_1(t) - \Xi d_2(t), \quad \varepsilon(0) = \bfzero_{(n_p + n) \times 1},
\]
where $\Xi$, $d_1(t)$, and $d_2(t)$ are the same as in the proof of \lemref{lem:modulus_condition}.
From \asmref{asm:stability} and the definitions of $d_1(t) \in \R^p$ and $d_2(t) \in \R^n$, it holds that, for all $t \in \N_0$, $\norm{ \Phi^t } \le c \gamma^t$, $\norm{ d_1(t) } \le 2^{-\ell - 1} \sqrt{p}$, and $\norm{ d_2(t) } < 2^{-\ell + 1} \sqrt{n}$.
Applying these bounds to the norm of $\varepsilon(t)$, we obtain
\begin{align*}
    \norm{ \varepsilon(t) }
    &= \norm{ \Phi^t \varepsilon(0) + \sum_{s = 0}^{t - 1} \Phi^{t - 1 - s} \qty( \Gamma d_1(s) + \Xi d_2(s) ) }, \\
    &\le 2^{-\ell} \qty( \frac{ \sqrt{p} }{ 2 } \norm*{ \Gamma } + 2 \sqrt{n} \norm*{ \Xi } ) \sum_{s = 0}^{t - 1} \norm*{ \Phi^s }, \\
    &< 2^{-\ell} \qty( \frac{ \sqrt{p} }{ 2 } \norm{ \Gamma } + 2 \sqrt{n} ) \frac{ c }{ 1 - \gamma },
\end{align*}
where $\sum_{s = 0}^{t - 1} \norm*{ \Phi^s } < c / ( 1 - \gamma )$.

From \lemref{lem:modulus_condition}, \eqref{eq:controller}, and \eqref{eq:approximate_input}, the error of the control inputs is given as
\begin{align*}
    u(t) - \hat{u}(t)
    &= ( C x(t) + D y(t) ) - 2^{-2 \ell} \qty( \bar{C} \tilde{x}(t) + \bar{D} \bar{y}(t) ), \\
    &= D C_p e_p(t) + C e(t) + D d_1(t), \\
    &= \Upsilon \varepsilon(t) + D d_1(t).
\end{align*}
With the bounds of $\varepsilon(t)$ and $d_1(t)$, it concludes that
\begin{align*}
    &\norm*{ u(t) - \hat{u}(t) }
    \le \norm{ \Upsilon } \norm{ \varepsilon(t) } + \norm{ D } \norm{ d_1(t) }, \\
    &\le 2^{-\ell} \qty( \frac{ \sqrt{p} }{ 2 } \norm{ \Gamma } \norm{ \Upsilon } + 2 \sqrt{n} \norm{ \Upsilon } ) \frac{ c }{ 1 - \gamma } + 2^{-\ell - 1} \sqrt{p} \norm{ D }, \\
    &< 2^{-\ell} \qty[ \frac{ \sqrt{p} }{ 2 } \qty( \norm{ \Gamma } \norm{ \Upsilon } + \norm{ D } ) + 2 \sqrt{n} \norm{ \Upsilon } ] \frac{ c }{ 1 - \gamma }.
\end{align*}
Therefore, \eqref{eq:error} holds if $\ell$ satisfies \eqref{eq:fractional_condition}.

\subsection{Proof of \corref{cor:fractional_condition}}
\label{app:proof_fractional_condition}

The claim can be obtained by applying the bounds $\norm{ \Gamma } = \norm{ B } \le 2^{k - \ell - 1} \sqrt{ np }$, $\norm{ \Upsilon } = \norm{ C } \le 2^{k - \ell - 1} \sqrt{ mn }$, and $\norm{ D } = 0$ to \eqref{eq:fractional_condition}.

\bibliographystyle{IEEEtran}
\bibliography{reference}

@INCOLLECTION{Escudero2020-ih,
title = "Improved primitives for {MPC} over mixed arithmetic-binary circuits",
author = "Escudero, Daniel and Ghosh, Satrajit and Keller, Marcel and Rachuri, Rahul and Scholl, Peter",
booktitle = "Advances in Cryptology -- CRYPTO 2020",
publisher = "Springer International Publishing",
pages = "823--852",
year =  2020
}

@ARTICLE{Boyle2020-sh,
title = "Correlated pseudorandom functions from variable-density {LPN}",
author = "Boyle, Elette and Couteau, Geoffroy and Gilboa, N and Ishai, Yuval and Kohl, Lisa and Scholl, Peter",
journal = "IEEE Annual Symposium on Foundations of Computer Science",
pages = "1069--1080",
year =  2020
}

@ARTICLE{Kim2023-pk,
title = "Dynamic Controller That Operates Over Homomorphically Encrypted Data for Infinite Time Horizon",
author = "Kim, Junsoo and Shim, Hyungbo and Han, Kyoohyung",
journal = "IEEE Transactions on Automatic Control",
publisher = "IEEE",
volume =  68,
number =  2,
pages = "660--672",
year =  2023
}

@ARTICLE{Teranishi2024-ha,
title = "Input-output history feedback controller for encrypted control with leveled fully homomorphic encryption",
author = "Teranishi, Kaoru and Sadamoto, Tomonori and Kogiso, Kiminao",
journal = "IEEE Transactions on Control of Network Systems",
publisher = "Institute of Electrical and Electronics Engineers (IEEE)",
volume =  11,
number =  1,
pages = "271--283",
year =  2024
}

@ARTICLE{Lee2025-jo,
title = "Encrypted dynamic control exploiting limited number of multiplications and a method using {RLWE}-based cryptosystem",
author = "Lee, Joowon and Lee, Donggil and Kim, Junsoo and Shim, Hyungbo",
journal = "IEEE Transactions on Systems, Man, and Cybernetics: Systems",
publisher = "Institute of Electrical and Electronics Engineers (IEEE)",
volume =  55,
number =  1,
pages = "158--169",
year =  2025
}

@ARTICLE{Astrom2000-ta,
title = "Benchmark systems for {PID} control",
author = "\AA{}str{\"{o}}m, K J and H{\"{a}}gglund, T",
journal = "IFAC Proceedings Volumes",
publisher = "Elsevier BV",
volume =  33,
number =  4,
pages = "165--166",
year =  2000
}

@ARTICLE{Johansson2000-il,
title = "The quadruple-tank process: A multivariable laboratory process with an adjustable zero",
author = "Johansson, K H",
journal = "IEEE Transactions on Control Systems Technology",
publisher = "Institute of Electrical and Electronics Engineers (IEEE)",
volume =  8,
number =  3,
pages = "456--465",
year =  2000
}

@INCOLLECTION{Couteau2023-fs,
title = "Pseudorandom correlation functions from variable-density {LPN}, revisited",
author = "Couteau, Geoffroy and Ducros, Cl\'{e}ment",
booktitle = "Lecture Notes in Computer Science",
publisher = "Springer Nature Switzerland",
pages = "221--250",
year =  2023
}

@BOOK{Katz2014-kb,
title = "Introduction to modern cryptography",
author = "Katz, Jonathan and Lindell, Yehuda",
publisher = "Chapman \& Hall/CRC",
edition =  2,
year =  2014
}

@ARTICLE{Smart2012-pc,
title = "Fully homomorphic {SIMD} operations",
author = "Smart, N P and Vercauteren, F",
journal = "Designs Codes and Cryptography",
volume =  71,
pages = "57--81",
year =  2012
}

@INCOLLECTION{Tabuada2003-fp,
title = "Model checking {LTL} over controllable linear systems is decidable",
author = "Tabuada, Paulo and Pappas, George J",
booktitle = "Hybrid Systems: Computation and Control",
publisher = "Springer Berlin Heidelberg",
pages = "498--513",
year =  2003
}

@ARTICLE{Brunovsk-`y1970-kk,
title = "A classification of linear controllable systems",
author = "Brunovsk\`y, Pavol",
journal = "Kybernetika",
publisher = "Institute of Information Theory and Automation AS CR",
volume =  6,
number =  3,
pages = "173--188",
year =  1970
}

@INCOLLECTION{Kalman1972-cq,
title = "Kronecker invariants and feedback",
author = "Kalman, R E",
booktitle = "Ordinary Differential Equations",
publisher = "Elsevier",
pages = "459--471",
year =  1972
}

@techreport{HES2018,
 author = {Martin Albrecht and Melissa Chase and Hao Chen and Jintai Ding and Shafi Goldwasser and Sergey Gorbunov and Shai Halevi and Jeffrey Hoffstein and Kim Laine and Kristin Lauter and Satya Lokam and Daniele Micciancio and Dustin Moody and Travis Morrison and Amit Sahai and Vinod Vaikuntanathan},
 title = {Homomorphic Encryption Security Standard},
 institution= {HomomorphicEncryption.org},
 publisher = {HomomorphicEncryption.org},
 address = {Toronto, Canada},
 year = {2018},
 month = {November}
 }

@BOOK{Cramer2015-sj,
title = "Secure multiparty computation and secret sharing",
author = "Cramer, Ronald and Damgaard, Ivan Bjerre and Nielsen, Jesper Buus",
publisher = "Cambridge University Press",
year =  2015
}

@INCOLLECTION{Beaver2007-kr,
title = "Efficient multiparty protocols using circuit randomization",
author = "Beaver, Donald",
booktitle = "Advances in Cryptology -- CRYPTO '91",
publisher = "Springer Berlin Heidelberg",
pages = "420--432",
year =  1992
}

@ARTICLE{Canetti2000-cl,
title = "Security and composition of multiparty cryptographic protocols",
author = "Canetti, Ran",
journal = "Journal of Cryptology",
publisher = "Springer Science and Business Media LLC",
volume =  13,
number =  1,
pages = "143--202",
year =  2000
}

@ARTICLE{Shannon1949-mc,
title = "Communication theory of secrecy systems",
author = "Shannon, C E",
journal = "The Bell System technical journal",
publisher = "Institute of Electrical and Electronics Engineers (IEEE)",
volume =  28,
number =  4,
pages = "656--715",
year =  1949
}

@ARTICLE{Regev2009-ys,
title = "On lattices, learning with errors, random linear codes, and cryptography",
author = "Regev, Oded",
journal = "Journal of the ACM",
publisher = "Association for Computing Machinery (ACM)",
volume =  56,
number =  6,
pages = "1--40",
year =  2009
}

@ARTICLE{Brakerski2014-ny,
title = "(Leveled) fully homomorphic encryption without bootstrapping",
author = "Brakerski, Zvika and Gentry, Craig and Vaikuntanathan, Vinod",
journal = "ACM Transactions on Computation Theory",
publisher = "Association for Computing Machinery (ACM)",
volume =  6,
number =  3,
pages = "1--36",
year =  2014
}

@ARTICLE{Fan2012-jt,
title = "Somewhat Practical Fully Homomorphic Encryption",
author = "Fan, Junfeng and Vercauteren, Frederik",
journal = "Cryptology ePrint Archive",
year =  2012
}

@ARTICLE{Gentry2013-eq,
title = "Homomorphic encryption from learning with errors: Conceptually-simpler, asymptotically-faster, attribute-based",
author = "Gentry, Craig and Sahai, A and Waters, Brent",
journal = "Annual International Cryptology Conference",
pages =  "75--92",
year =  2013
}

@ARTICLE{Cheon2017-pd,
title = "Homomorphic encryption for arithmetic of approximate numbers",
author = "Cheon, Jung Hee and Kim, Andrey and Kim, Miran and Song, Yongsoo",
journal = "International Conference on the Theory and Application of Cryptology and Information Security",
pages = "409--437",
year =  2017
}

@ARTICLE{Darup2021-qq,
title = "Encrypted Control for Networked Systems: An Illustrative Introduction and Current Challenges",
author = "Darup, Moritz Schulze and Alexandru, Andreea B and Quevedo, Daniel E and Pappas, George J",
journal = "IEEE Control Systems Magazine",
publisher = "IEEE",
volume =  41,
number =  3,
pages = "58--78",
year =  2021
}

@ARTICLE{Kim2022-ck,
title = "Comparison of encrypted control approaches and tutorial on dynamic systems using Learning With Errors-based homomorphic encryption",
author = "Kim, Junsoo and Kim, Dongwoo and Song, Yongsoo and Shim, Hyungbo and Sandberg, Henrik and Johansson, Karl H",
journal = "Annual Reviews in Control",
publisher = "Elsevier BV",
volume =  54,
pages = "200--218",
year =  2022
}

@ARTICLE{Schluter2023-ia,
title = "A brief survey on encrypted control: From the first to the second generation and beyond",
author = "Schl{\"{u}}ter, Nils and Binfet, Philipp and Schulze Darup, Moritz",
journal = "Annual Reviews in Control",
publisher = "Elsevier BV",
volume =  56,
pages =  100913,
year =  2023
}

@INPROCEEDINGS{Kogiso2015-go,
title = "Cyber-security enhancement of networked control systems using homomorphic encryption",
author = "Kogiso, Kiminao and Fujita, Takahiro",
booktitle = "IEEE Conference on Decision and Control",
pages = "6836--6843",
year =  2015
}

@ARTICLE{Kim2016-lc,
title = "Encrypting Controller using Fully Homomorphic Encryption for Security of Cyber-Physical Systems",
author = "Kim, Junsoo and Lee, Chanhwa and Shim, Hyungbo and Cheon, Jung Hee and Kim, Andrey and Kim, Miran and Song, Yongsoo",
journal = "IFAC-PapersOnLine",
publisher = "Elsevier BV",
volume =  49,
number =  22,
pages = "175--180",
year =  2016
}

@ARTICLE{Murguia2020-ov,
title = "Secure and private implementation of dynamic controllers using semi-homomorphic encryption",
author = "Murguia, Carlos and Farokhi, Farhad and Shames, Iman",
journal = "IEEE Transactions on Automatic Control",
volume =  65,
number =  9,
pages = "3950--3957",
year =  2020
}

@INPROCEEDINGS{Schluter2021-ek,
title = "Encrypted dynamic control with unlimited operating time via {FIR} filters",
author = "Schl{\"{u}}ter, Nils and Neuhaus, Matthias and Darup, Moritz Schulze",
booktitle = "European Control Conference",
pages = "952--957",
year =  2021
}

@INPROCEEDINGS{Darup2019-ou,
title = "Encrypted cloud-based control using secret sharing with one-time pads",
author = "Darup, Moritz Schulze and Jager, Tibor",
booktitle = "IEEE Conference on Decision and Control",
pages = "7215--7221",
year =  2019
}

@ARTICLE{Darup2020-cm,
title = "Encrypted polynomial control based on tailored two-party computation",
author = "Darup, Moritz Schulze",
journal = "International Journal of Robust and Nonlinear Control",
volume =  30,
number =  11,
pages = "4168--4187",
year =  2020
}

@INPROCEEDINGS{Schlor2021-am,
title = "Multi-party computation enables secure polynomial control based solely on secret-sharing",
author = "Schlor, Sebastian and Hertneck, Michael and Wildhagen, Stefan and Allg{\"{o}}wer, Frank",
booktitle = "IEEE Conference on Decision and Control",
pages = "4882--4887",
year =  2021
}

@INPROCEEDINGS{Tjell2021-ar,
title = "Secure learning-based {MPC} via garbled circuit",
author = "Tjell, Katrine and Schl{\"{u}}ter, Nils and Binfet, Philipp and Darup, Moritz Schulze",
booktitle = "IEEE Conference on Decision and Control",
pages = "4907--4914",
year =  2021
}

@INPROCEEDINGS{Alexandru2018-jv,
title = "Cloud-Based {MPC} with Encrypted Data",
author = "Alexandru, Andreea B and Morari, Manfred and Pappas, George J",
booktitle = "IEEE Conference on Decision and Control",
pages = "5014--5019",
year =  2018
}

@INCOLLECTION{Ohata2020-py,
title = "Communication-efficient (client-aided) secure two-party protocols and its application",
author = "Ohata, Satsuya and Nuida, Koji",
booktitle = "Financial Cryptography and Data Security",
publisher = "Springer International Publishing",
pages = "369--385",
year =  2020
}

@ARTICLE{Satriawan2023-gz,
title = "Conceptual review on number theoretic transform and comprehensive review on its implementations",
author = "Satriawan, Ardianto and Syafalni, Infall and Mareta, Rella and Anshori, Isa and Shalannanda, Wervyan and Barra, Aleams",
journal = "IEEE Access",
publisher = "Institute of Electrical and Electronics Engineers (IEEE)",
volume =  11,
pages = "70288--70316",
year =  2023
}

@INPROCEEDINGS{Cheon2018-vr,
title = "Need for controllers having integer coefficients in homomorphically encrypted dynamic system",
author = "Cheon, Jung Hee and Han, Kyoohyung and Kim, Hyuntae and Kim, Junsoo and Shim, Hyungbo",
booktitle = "IEEE Conference on Decision and Control",
pages = "5020--5025",
year =  2018
}

@ARTICLE{Han2018-pm,
title = "Privacy in control and dynamical systems",
author = "Han, Shuo and Pappas, George J",
journal = "Annual Review of Control, Robotics, and Autonomous Systems",
publisher = "Annual Reviews",
volume =  1,
pages = "309--332",
year =  2018
}

\begin{IEEEbiography}[{\includegraphics[width=1in,height=1.25in,clip,keepaspectratio]{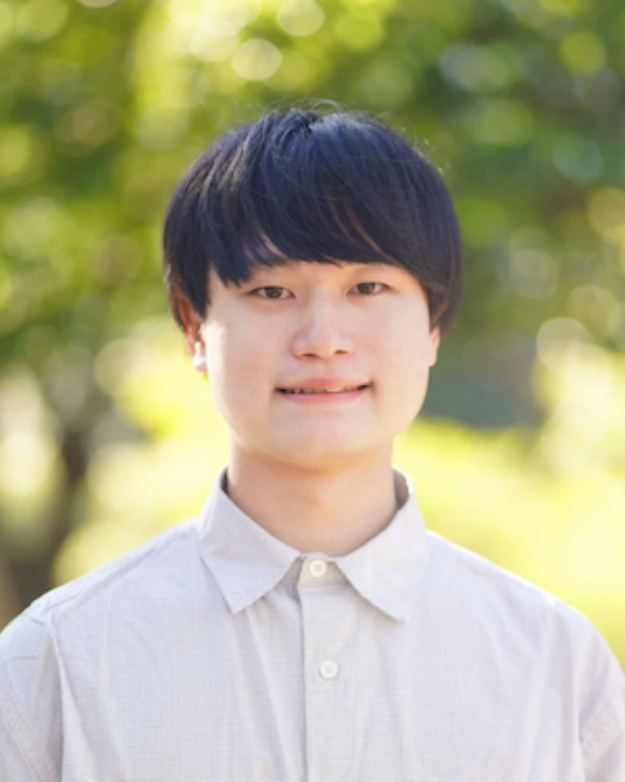}}]
{Kaoru Teranishi} (Member, IEEE) received his B.S. degree in electronic and mechanical engineering from the National Institute of Technology, Ishikawa College, Ishikawa, Japan, in 2019, and M.S. and Ph.D. degrees in mechanical and intelligent systems engineering from the University of Electro-Communications, Tokyo, Japan, in 2021 and 2024, respectively.
From October 2019 to September 2020, he was a Visiting Scholar at the Georgia Institute of Technology, GA, USA. 
From April 2021 to March 2024, he was a Research Fellow of the Japan Society for the Promotion of Science (JSPS), Tokyo, Japan.
From June 2024 to January 2025, he was a Research Affiliate Postdoctoral at the University of Texas at Austin, TX, USA.
He is currently a JSPS Overseas Research Fellow and a Visiting Scholar at Purdue University, IN, USA.
His research interests include control theory and cryptography for control systems security.
\end{IEEEbiography}

\begin{IEEEbiography}[{\includegraphics[width=1in,height=1.25in,clip,keepaspectratio]{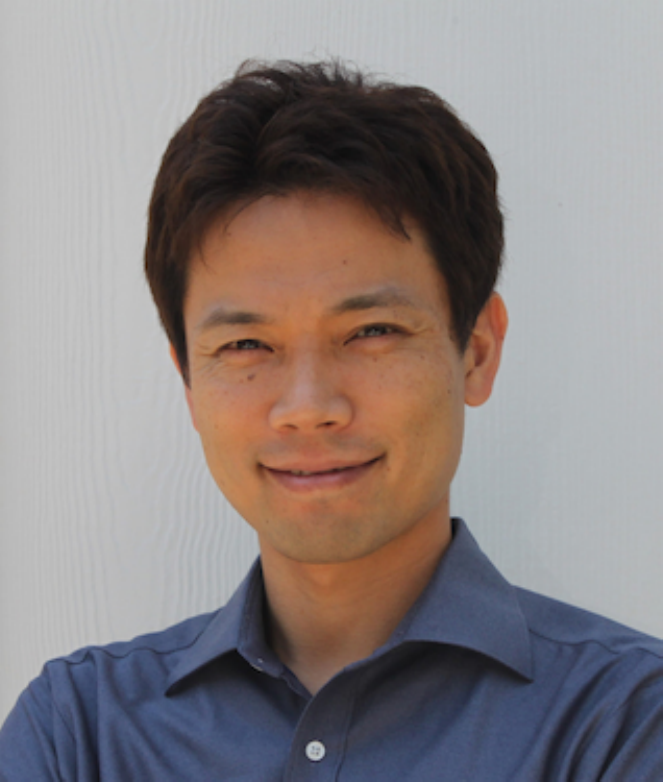}}]
{Takashi Tanaka} (Senior Member, IEEE) received the B.S. degree in Aerospace Engineering from the University of Tokyo, Tokyo, Japan, in 2006, and the M.S. and Ph.D. degrees in Aerospace Engineering (Automatic Control) from the University of Illinois at Urbana-Champaign, Urbana, IL, USA, in 2009 and 2012, respectively.
He was a Postdoctoral Associate with the Laboratory for Information and Decision Systems, Massachusetts Institute of Technology, Cambridge, MA, USA, from 2012 to 2015, and a Postdoctoral Researcher at KTH Royal Institute of Technology, Stockholm, Sweden, from 2015 to 2017.
He was an Assistant Professor in the Department of Aerospace Engineering and Engineering Mechanics at the University of Texas at Austin, Austin, TX, USA between 2017 and 2024, where he was an Associate Professor in 2024.
Since 2025, he has been an Associate Professor of Aeronautics and Astronautics and of Electrical and Computer Engineering at Purdue University, West Lafayette, IN, USA.
Dr. Tanaka was the recipient of the DARPA Young Faculty Award, the AFOSR Young Investigator Program Award, and the NSF Career Award.
\end{IEEEbiography}

\end{document}